\newtheorem{lemma}{Lemma}
\newtheorem{theorem}[lemma]{Theorem}
\newtheorem{corollary}[lemma]{Corollary}
\newtheorem{remark}[lemma]{Remark}
\renewcommand\phi{\varphi}
\newcommand{\norm}[1]{ \left\| #1 \right\|}
\renewcommand{\geq}{\geqslant}
\renewcommand{\leq}{\leqslant}
\renewcommand{\hat}{\widehat}
\renewcommand{\tilde}{\widetilde}
\DeclareMathOperator{\sgn}{sgn}
\newcommand{\functional}{\mathcal{F}(\gamma,\alpha,\rho_0)}
\newcommand{\FF}{\mathcal{F}}
\newcommand{\functionalflex}[3]{\mathcal{F}({#1},{#2},{#3})}
\newcommand{\triple}{(\gamma,\alpha,\rhozero)}
\newcommand{\triplekappa}{(\gammakappa,\alphakappa,\rhozerokappa)}
\newcommand{\RR}{\mathbb{R}}
\newcommand{\RRdrei}{\mathbb{R}^3}
\newcommand{\p}{\partial}
\newcommand{\intp}{\int_{\RR^3}dp\,}
\newcommand{\intq}{\int_{\RR^3}dq\,}
\newcommand{\intpq}{\int_{\RRdrei\times\RRdrei}dpdq\,}
\newcommand{\innerprod}[2]{\langle #1, #2\rangle}
\newcommand{\abs}[1]{\left\vert #1 \right\vert}
\newcommand{\ltwonorm}[1]{\norm{#1}_{L^2(\RRdrei)}}
\newcommand{\lp}{L^p(\RRdrei)}
\newcommand{\ltwo}{L^2(\RRdrei)}
\newcommand{\lone}{L^1(\RRdrei)}
\newcommand{\linf}{L^\infty(\RRdrei)}
\newcommand{\Vhat}{\hat{V}}
\newcommand{\Vhatp}{\hat{V}(p)}
\newcommand{\Vhatq}{\hat{V}(q)}
\newcommand{\Vhatpq}{\hat{V}(p-q)}
\newcommand{\gp}{\gamma (p)}
\newcommand{\gq}{\gamma (q)}
\newcommand{\ap}{\alpha (p)}
\newcommand{\aq}{\alpha (q)}
\newcommand{\rhozero}{\rho_0}
\newcommand{\gammakappa}{\gamma_\kappa}
\newcommand{\alphakappa}{\alpha_\kappa}
\newcommand{\rhozerokappa}{\rho_{0,\kappa}}
\newcommand{\rhokappa}{\rho_\kappa}
\newcommand{\rhogammakappa}{\rho_{\gamma,\kappa}}
\newcommand{\gammakappap}{\gamma_\kappa(p)}
\newcommand{\alphakappap}{\alpha_\kappa(p)}
\newcommand{\gammakappaq}{\gamma_\kappa(q)}
\newcommand{\alphakappaq}{\alpha_\kappa(q)}
\date{\today}
\begin{document}
 
\title[On Ground States of the Bogoliubov Energy Functional: A Direct Proof]{On Ground States of the Bogoliubov Energy Functional: A Direct Proof}

\author[J. Oldenburg]{Jakob Oldenburg}
\address{Department of Mathematics, LMU Munich, Theresienstrasse 39, 80333 Munich, Germany} 
\email{\href{mailto:jakob.oldenburg@math.uzh.ch}{jakob.oldenburg@math.uzh.ch}}

\date{\today}

\let\thefootnote\relax\footnote{This article may be downloaded for personal use only. Any other use requires prior permission of the author and AIP Publishing. This article appeared in Journal of Mathematical Physics 62, 071902 (2021) and may be found at \href{https://doi.org/10.1063/5.0054712}{https://doi.org/10.1063/5.0054712}.}

\let\thefootnote\relax\footnote{\textit{Present address: }Institute of Mathematics, University of Zürich, Winterthurerstrasse 190, 8057 Zürich, Switzerland}

\begin{abstract}
The Bogoliubov energy functional proposed recently by Napi\'orkowski, Reuvers and Solovej is revisited. We offer a direct proof of the existence of minimizers at zero temperature, which covers a significantly larger class of interaction potentials. The ideas used in this proof also imply that in any ground state, more than half of the particles are inside the Bose-Einstein condensate.
\end{abstract}

\maketitle

\section{Introduction and Main Results}
\setcounter{page}{1}
\pagenumbering{arabic}

Almost 100 years ago, in 1924/25, Bose and Einstein predicted that at low temperatures, bosons should macroscopically occupy the same quantum state and thereby form a condensate \cite{boseprediction, einsteinprediction}. However, their analysis focused on an ideal Bose gas and did not take particle interactions into account. An approximate theory of the interacting Bose gas was proposed by Bogoliubov in 1947 in order to explain the occurrence of superfluidity in liquid helium \cite{BogoliubovPrediction}. The first experimental observation of Bose-Einstein condensates was achieved much later: in 1995, both the groups of Cornell and Wieman and the one of Ketterle successfully created condensates \cite{observationWiemanCornell,observationKetterle}. Since then, much progress has been made regarding the mathematical analysis of the Bose gas.

Recently, Napi\'orkowski, Reuvers and Solovej presented Bogoliubov's approximation in a variational formulation: They considered the free energy of a grand canonical, translation-invariant system of bosons in a box for states that describe a Bose-Einstein condensate coupled with a quasi-free state. By taking the thermodynamic limit, they obtained the Bogoliubov free energy functional \cite{bogoliubovenergy}.
This derivation of the functional seems to go back to a paper by Critchley and Solomon from 1976 \cite{CritchleySolomon}. In their analysis of the functional, Napi\'orkowski, Reuvers and Solovej first show that there are minimizers at any temperature $T$ for sufficiently regular interaction potentials and then derive a formula for the critical temperature and an expansion of the ground state energy in the dilute limit (achieved by fixing the expectation value of the density) which is related to the Lee-Huang-Yang formula \cite{bogoliubovenergycriticaltemp}.

It is interesting that the proof of the existence of minimizers at $T=0$ in \cite{bogoliubovenergy} is indirect: The authors first show the existence of minimizers for positive temperature. Then they use that for $T \to 0$ these minimizers form a minimizing sequence of the functional for $T=0$ and obtain a ground state at zero temperature from this sequence. This proof requires strong conditions on the interaction potentials.

In the present work, we revisit the Bogoliubov free energy functional as defined in \cite{bogoliubovenergy} for temperature $T=0$ and chemical potential $\mu\in\RR$ and offer a direct proof of the existence of minimizers for a larger class of interaction potentials. Since the free energy in this case is just the ground state energy, we refer to it as the Bogoliubov energy functional. For particles interacting via a potential $V$, it is given by

\begin{equation}\label{eqfunctional}
\begin{split}
\functional &\coloneqq \intp p^2 \gp  -\mu \rho +\frac{\Vhat(0)}{2}\rho^2 + \rhozero \intp \Vhatp (\gp +\ap)\\
			&\hphantom{\coloneqq}+ \frac{1}{2}\intpq \Vhatpq (\gp\gq+\ap\aq)
\end{split}
\end{equation}
where
\begin{equation}
\rho\coloneqq \rhozero+\rho_\gamma\coloneqq \rhozero +\intp \gp.
\end{equation}
It is defined on the domain
\begin{equation}
\mathcal{D} \coloneqq \{(\gamma,\alpha,\rhozero)\vert \gamma\in L^1(\RRdrei,(1+p^2)dp), \gp \geq 0, \ap^2\leq \gp^2 +\gp,\rhozero\geq 0\}.
\end{equation}
We absorbed a factor $1/(2\pi)^3$ into the measure on $\RRdrei$ for simplicity of the notation.

The function $\gamma$ describes the one-particle density in momentum space; hence, $\rho_\gamma$ can be interpreted as the density of particles outside the condensate while $\rhozero$ is the density of the condensate. The positivity of $\gamma$ and $\rhozero$ reflects the fact that particle numbers are positive. Pairing in the system is described by the real-valued function $\alpha$. The condition $\alpha^2\leq\gamma^2+\gamma$ stems from the fact that the generalized one-particle density matrix associated with a state is positive.

Our main result is
\begin{theorem}[Minimizers of the Bogoliubov Energy Functional]\label{theoremExistence}
Let the interaction potential $V$ satisfy
\begin{equation}
\label{eqpropertiesV}
V\geq 0, \quad \Vhat \geq 0,  \quad V \not\equiv 0, \quad V \in \lone, \quad \Vhat \in \lone.
\end{equation}
For all chemical potentials $\mu\in\RR$, the functional $\FF$ has a minimizer $(\tilde{\gamma},\tilde{\alpha},\tilde{\rho}_0)\in \mathcal{D}$:
\begin{equation}
\inf_{\triple\in\mathcal{D}}\functional = \functionalflex{\tilde{\gamma}}{\tilde{\alpha}}{\tilde{\rho}_0}.
\end{equation}
\end{theorem}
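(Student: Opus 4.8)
I would argue by the direct method of the calculus of variations.

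\emph{Step 1 (lower bound and coercivity).} Since $V\geq 0$, Bochner's theorem makes $\Vhatpq$ a kernel of positive type, so $\intpq\Vhatpq\gp\gq\geq 0$ and $\intpq\Vhatpq\ap\aq\geq 0$; also $\Vhat(0)=\int V>0$. From $\ap^2\leq\gp^2+\gp\leq(\gp+\tfrac12)^2$ we get $|\ap|\leq\gp+\tfrac12$, hence $\gp+\ap\geq-\tfrac12$ and, using $\Vhat\geq 0$, $\rhozero\intp\Vhatp(\gp+\ap)\geq-\tfrac{\rhozero}{2}\lonenorm{\Vhat}$. Discarding the nonnegative kinetic and double‑integral terms and using $\rhozero\leq\rho$,
\begin{equation*}
\functional\;\geq\;\frac{\Vhat(0)}{2}\rho^2-\Bigl(\mu+\tfrac12\lonenorm{\Vhat}\Bigr)\rho\;\geq\;-\frac{\bigl(\mu+\tfrac12\lonenorm{\Vhat}\bigr)^2}{2\Vhat(0)},
\end{equation*}
so $\inf_{\cD}\FF>-\infty$. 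Applied to a minimizing sequence $(\gamma_n,\alpha_n,\rho_{0,n})$, the same chain first bounds $\rho_n$ (hence $\rho_{0,n}$ and $\rho_{\gamma,n}=\intp\gamma_n$) and then, re‑inserting the now‑controlled terms, bounds $\intp p^2\gamma_n$. Thus $\gamma_n$ is bounded in $L^1(\RRdrei,(1+p^2)dp)$, and along a subsequence $\rho_{0,n}\to\tilde\rho_0\geq 0$.

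\emph{Step 2 (compactness).} This is the heart of the matter. The bound on $\intp p^2\gamma_n$ gives tightness, so no mass of $\gamma_n$ escapes to infinity; the real danger is concentration of $\gamma_n$, since a bound in weighted $L^1$ gives no uniform integrability — and note $\alpha_n$ need not even lie in $L^1(\RRdrei)$. I would first replace the minimizing sequence by a better‑behaved one: any part of $\gamma_n$ (with the associated part of $\alpha_n$) that concentrates, sits at large $|p|$, or exceeds a large threshold, may be transferred into the condensate, i.e.\ $(\gamma^{(1)}+\gamma^{(2)},\alpha^{(1)}+\alpha^{(2)},\rhozero)$ is replaced by $(\gamma^{(1)},\alpha^{(1)},\rhozero+\intp\gamma^{(2)})$. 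This keeps $\rho$ fixed, lowers the kinetic energy, and — expanding the interaction terms and using $V\geq 0$, $\Vhat\geq 0$, $|\alpha^{(2)}|\leq\sqrt{(\gamma^{(2)})^2+\gamma^{(2)}}$ — changes the rest by an amount that can be made arbitrarily small uniformly in $n$. After this modification $\{\gamma_n\}$ is relatively weakly compact in $L^1(\RRdrei)$: $\gamma_n\wto\tilde\gamma$ along a subsequence, with $\tilde\gamma\geq 0$, $\intp\tilde\gamma=\lim\rho_{\gamma,n}$, and $\tilde\gamma\in L^1(\RRdrei,(1+p^2)dp)$ by weak lower semicontinuity of $\gamma\mapsto\intp(1+p^2)\gamma$.

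\emph{Step 3 (passage to the limit).} For fixed $\rhozero$ the constraint set is convex — it is the epigraph of the convex function $\alpha\mapsto\sqrt{\alpha^2+\tfrac14}-\tfrac12$ — and $\FF$ is convex in $(\gamma,\alpha)$ on it (the only non‑convex term, $\rhozero\intp\Vhatp\ap$, is controlled because $\rho_{0,n}\to\tilde\rho_0$). By Mazur's lemma there are convex combinations $\hat\gamma_N\in\mathrm{conv}\{\gamma_n:n\geq N\}$ with $\hat\gamma_N\to\tilde\gamma$ strongly in $L^1$; the corresponding $(\hat\gamma_N,\hat\alpha_N,\hat\rho_{0,N})$ is still a minimizing sequence in $\cD$, since $t\mapsto\sqrt{t^2+t}$ is concave and therefore $|\hat\alpha_N|\leq\sum c_{N,n}\sqrt{\gamma_n^2+\gamma_n}\leq\sqrt{\hat\gamma_N^2+\hat\gamma_N}$. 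Along a further subsequence $\hat\gamma_N\to\tilde\gamma$ and $\hat\alpha_N\to\tilde\alpha$ a.e., so $(\tilde\gamma,\tilde\alpha,\tilde\rho_0)\in\cD$, and $\FF$ is lower semicontinuous: $\intp p^2\gamma$ by Fatou; $-\mu\rho$ and $\tfrac{\Vhat(0)}{2}\rho^2$ by $\rho_{\gamma,N}\to\intp\tilde\gamma$; the linear term by (generalized) dominated convergence using $\Vhat\in\lone\cap\linf$ together with $\sqrt{\hat\gamma_N}\to\sqrt{\tilde\gamma}$ in $L^2$ (from $(\sqrt a-\sqrt b)^2\leq|a-b|$) and a.e.\ convergence; and the two double integrals because they equal $\id{\RRdrei}{V(x)\,|g(x)|^2}$ with $g$ the (inverse) Fourier transform of the respective function, hence are weakly lower semicontinuous nonnegative quadratic forms. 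Therefore $\FF(\tilde\gamma,\tilde\alpha,\tilde\rho_0)\leq\liminf_N\FF(\hat\gamma_N,\hat\alpha_N,\hat\rho_{0,N})=\inf_{\cD}\FF$.

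\emph{Main obstacle.} Everything hinges on Step 2: the only a priori estimate controls $\gamma_n$ merely in the non‑reflexive weighted space $L^1(\RRdrei,(1+p^2)dp)$, it does not prevent concentration, and it provides essentially no control on $\alpha_n$. The device of relocating concentrating, escaping, or overly peaked mass into the condensate — whose cost estimate relies crucially on $V\geq 0$, $\Vhat\geq 0$ and the constraint $\alpha^2\leq\gamma^2+\gamma$ — is what rescues the direct method; the convexity/concavity bookkeeping, Mazur's lemma and Fatou's lemma of Step 3 are then routine.
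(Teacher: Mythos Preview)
Your Step~2 contains the crucial gap. The claim that transferring the part of $\gamma_n$ exceeding a large threshold into the condensate ``changes the rest by an amount that can be made arbitrarily small uniformly in $n$'' is precisely the hard part of the problem, and the one-line justification you give does not establish it. Concretely, with $\gamma^{(2)}=\gamma_n\mathbbm{1}_{\{\gamma_n>\kappa\}}$, $\gamma^{(1)}=\gamma_n-\gamma^{(2)}$ and $\alpha^{(1)},\alpha^{(2)}$ defined analogously, the change in the linear term is
\[
-\rho_{0,n}\int\Vhat\bigl(\gamma^{(2)}+\alpha^{(2)}\bigr)\;+\;\Bigl(\int\gamma^{(2)}\Bigr)\int\Vhat\bigl(\gamma^{(1)}+\alpha^{(1)}\bigr),
\]
and the change in the $\alpha$-quadratic term contains the cross piece $-\intpq\Vhatpq\,\alpha^{(1)}(p)\alpha^{(2)}(q)$. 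If $\gamma_n$ genuinely concentrates --- meaning $\int\gamma^{(2)}$ stays of order one no matter how large $\kappa$ is --- neither $V\geq0$, $\Vhat\geq0$, nor $|\alpha^{(2)}|\leq\sqrt{(\gamma^{(2)})^2+\gamma^{(2)}}$ forces these contributions to be small or to have a favourable sign; only the measure $|\{\gamma_n>\kappa\}|\leq C/\kappa$ is small, not the mass sitting on it. So your device does not rule out concentration, and once Step~2 fails, Step~3 has nothing to stand on: with only $|\alpha_n|\leq\gamma_n+\tfrac12$ available there is no compactness for $\alpha_n$, and the claimed a.e.\ convergence $\hat\alpha_N\to\tilde\alpha$ has no source.

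The paper confronts exactly the same obstacle and resolves it by a different route. It first minimizes on the cut-off domains $\cD_\kappa=\{\gamma\leq\kappa\}$, where your Step~3 argument (carried out in $L^s$, $s\in(6/5,2)$) goes through cleanly and yields restricted minimizers $\triplekappa$. The real work is then to prove a \emph{uniform} bound $\gammakappa(p)\leq C/(1+p^4)$ independent of $\kappa$: this uses the Euler--Lagrange structure (restricted minimizers are pure, $\alphakappa^2=\gammakappa^2+\gammakappa$, and the variation in $\alpha$ gives $4\alphakappa^2\leq B_\kappa^2/(A_\kappa^2-B_\kappa^2)$ with $A_\kappa,B_\kappa$ the functional derivatives), an Arzel\`a--Ascoli argument for uniform convergence of $A_\kappa,B_\kappa$ on compacta, and --- crucially --- two trial-state energy comparisons (a shift $\gammakappa(\cdot-p)$ and a rescaling $\eta\gammakappa$ with compensating condensate) that yield $A_\kappa^2-B_\kappa^2\geq\epsilon>0$ on any ball together with $\tilde\rho_0>\tilde\rho_\gamma$. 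Once this uniform bound is in hand, some restricted minimizer already lies in every $\cD_\kappa$ and is therefore a full minimizer. This Euler--Lagrange and trial-state analysis is the substitute for your unproved Step~2, and there does not appear to be a soft shortcut around it.
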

Note that the conditions satisfied by the interaction potential immediately imply that $V, \Vhat\in C^0(\RRdrei)\cap \linf$ and that both $V$ and $\Vhat$ are symmetric: $\Vhatp = \Vhat (-p)$ and $V(x)=V(-x)$.

Some of the ideas used in the proof of the existence of minimizers yield corresponding conditions for any minimizer at $T=0$. This allows to prove three properties of such minimizers.

\begin{corollary}[Properties of Minimizers]\label{corollaryproperties}
Any minimizer $\triple$ of $\FF$ for $\mu >0$ satisfies the following:
\begin{enumerate}
\item $\alpha (p)^2=\gamma (p) ^2 + \gamma (p)$ for almost all $p\in\RRdrei$
\item $\rhozero > \rho_\gamma$
\item $
\gamma (p) \leq 
\begin{cases}
C & \textrm{ for } \abs{p}\leq P_0\\
C\abs{p}^{-4} & \textrm{ for } \abs{p}\geq P_0
\end{cases}
$
\quad for some constants $C$ and $P_0$.
\end{enumerate}
\end{corollary}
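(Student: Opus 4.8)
\emph{Proof proposal.}
The plan is to read off the three properties from the Euler--Lagrange (EL) conditions of a minimizer, in the stated order, since (1) feeds into (2) and (3). Fix $\mu>0$ and a minimizer $\triple\in\mathcal{D}$, write $\rho_\gamma\coloneqq\intp\gp$, and set $A(p)\coloneqq p^2-\mu+\Vhat(0)\rho+\rhozero\Vhatp+(\Vhat*\gamma)(p)$ and $B(p)\coloneqq\rhozero\Vhatp+(\Vhat*\alpha)(p)$; both are continuous and bounded since $\Vhat\in C^0\cap\linf$ and $\gamma\in\lone$. I would first show that a minimizer has $\rhozero>0$ and $\gamma\not\equiv0$, by comparing with the pure--condensate state $\gamma=\alpha=0$, $\rhozero=\mu/\Vhat(0)$, of energy $-\mu^2/(2\Vhat(0))$: any configuration with $\rhozero=0$ has energy $\geq-\mu^2/(2\Vhat(0))$ (the surviving $\alpha$--term $\tfrac12\intpq\Vhatpq\ap\aq$ is nonnegative and $\intp p^2\gp-\mu\rho_\gamma+\tfrac{\Vhat(0)}2\rho_\gamma^2\geq-\mu^2/(2\Vhat(0))$), with equality impossible, while adding to the pure condensate a small positive bump of $\gamma$ near $p=0$ with $\ap\approx-\sqrt{\gp^2+\gp}$ strictly lowers the energy, so the infimum is $<-\mu^2/(2\Vhat(0))$. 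Since $\rhozero>0$, stationarity in $\rhozero$ gives the identity $\mu=\Vhat(0)\rho+\intp\Vhatp(\gp+\ap)$.

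For (1), put $S\coloneqq\{p:\ap^2<\gp^2+\gp\}$; on $S$ one has $\gp>0$ and the constraint is slack, so $\gamma$ and $\alpha$ admit two--sided admissible variations there, and stationarity gives $A\equiv0$ and $B\equiv0$ on $S$. Because $\Vhat\geq0$ makes $\Vhat(0)\rho$, $\rhozero\Vhatp$ and $(\Vhat*\gamma)(p)$ all nonnegative, $A=0$ forces $p^2\leq\mu$ on $S$, so $S$ is bounded; it remains to prove $|S|=0$. \emph{This exclusion of slackness at small momenta is the step I expect to be the real obstacle:} it cannot follow from first--order stationarity alone (a minimizer satisfies all EL relations whether or not $S$ is null), and I would try to rule it out by constructing, whenever $|S|>0$, an explicit competitor of the form $(\gamma',-\sqrt{(\gamma')^2+\gamma'},\rhozero)$ of strictly smaller energy, using global minimality; I expect the same analysis to simultaneously pin the sign, so that $\ap=-\sqrt{\gp^2+\gp}$. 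On this constraint surface the EL equation reads $A(p)=\tfrac{2\gp+1}{2\sqrt{\gp^2+\gp}}\,B(p)$ for a.e.\ $p$ with $\gp>0$, equivalently the Bogoliubov form $\gp=\tfrac12\big(A(p)/E(p)-1\big)$, $\ap=-B(p)/(2E(p))$ with $E\coloneqq\sqrt{A^2-B^2}$.

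The remaining properties then follow cheaply. From $\ap=-\sqrt{\gp^2+\gp}$ we get $\abs{\ap}\geq\gp$, so the $\rhozero$--identity gives $\Vhat(0)\rho-\mu=\intp\Vhatp(\abs{\ap}-\gp)\geq0$; hence $A(p)\geq p^2$, $A(p)-B(p)=p^2+(\Vhat(0)\rho-\mu)+(\Vhat*(\gamma+\abs{\alpha}))(p)\geq p^2$, and $A(0)\geq\rhozero\Vhat(0)>0$, so $A>0$ a.e. As $\tfrac{2\gp+1}{2\sqrt{\gp^2+\gp}}>0$, the EL equation then forces $B(p)>0$ where $\gp>0$ (and $B=\rhozero\Vhatp\geq0$ where $\gp=0$). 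For (2): $B\geq0$ means $(\Vhat*\abs{\alpha})(p)\leq\rhozero\Vhatp$ pointwise; integrating (Tonelli) gives $\lonenorm{\Vhat}\lonenorm{\alpha}\leq\rhozero\lonenorm{\Vhat}$, hence $\lonenorm{\alpha}\leq\rhozero$ (in particular $\alpha\in\lone$), and since $\abs{\ap}>\gp$ wherever $\gp>0$ with $\gamma\not\equiv0$ we obtain $\rho_\gamma=\lonenorm{\gamma}<\lonenorm{\alpha}\leq\rhozero$. For (3): we upgrade to $\Vhat(0)\rho-\mu>0$ strictly --- $\gamma$ cannot be supported in $\{\Vhat=0\}$ (there $B=-(\Vhat*\abs{\alpha})\leq0$, contradicting $B>0$ on $\{\gp>0\}$), so $\{\gp>0\}\cap\{\Vhatp>0\}$ has positive measure and the integrand $\Vhatp(\abs{\ap}-\gp)$ is positive on it. Writing $c_0\coloneqq\Vhat(0)\rho-\mu>0$, the bounds above give $E\geq\max(c_0,p^2)$, $A+E\geq\max(c_0,p^2)$ and $0\leq B\leq\rhozero\Vhatp\leq\rhozero\linfnorm{\Vhat}$, so from $\gp=\tfrac{A-E}{2E}=\tfrac{B(p)^2}{2E(A+E)}$ we conclude $\gp\leq(\rhozero\linfnorm{\Vhat})^2/(2c_0^2)$ for all $p$ and $\gp\leq\tfrac12(\rhozero\linfnorm{\Vhat})^2\abs{p}^{-4}$ for $\abs{p}\geq\sqrt{c_0}$ --- the claimed bound with $P_0=\sqrt{c_0}$ and $C$ the larger of the two constants.
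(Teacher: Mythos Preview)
Your plan and most of the computations are sound, but there are two concrete gaps.

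\textbf{Property (1) is easier than you think.} You already derived the $\rhozero$-stationarity identity $\mu=\Vhat(0)\rho+\intp\Vhatp(\gp+\ap)$ and the strict energy bound $\functional<-\mu^2/(2\Vhat(0))$. Feed the latter back into the functional: since the kinetic and quadratic terms are nonnegative and $-\mu\rho+\tfrac{\Vhat(0)}{2}\rho^2\geq-\mu^2/(2\Vhat(0))$, the remaining term $\rhozero\intp\Vhatp(\gp+\ap)$ must be strictly negative, hence $\intp\Vhatp(\gp+\ap)<0$. Substituting into $A$ via the $\rhozero$-identity gives
\[
A(p)=p^2-\intq\Vhat(q)(\gamma(q)+\alpha(q))+\rhozero\Vhatp+(\Vhat*\gamma)(p)>0
\]
for \emph{every} $p$. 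Combined with your observation that $A\equiv0$ on the slack set $S$, this forces $|S|=0$ immediately. So (1) does follow from first-order stationarity together with the global energy bound; no further competitor construction is needed. This is exactly the paper's argument.

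\textbf{The sign assumption $\alpha\leq0$ is the real gap.} Your arguments for (2) and (3) both rest on $\ap=-\sqrt{\gp^2+\gp}$, which you only ``expect'' but do not prove. The constrained Euler--Lagrange relation gives $B=-\tfrac{\alpha}{\gamma+1/2}A$, so $\mathrm{sign}(B)=-\mathrm{sign}(\alpha)$ where $\gamma>0$; nothing so far excludes $\alpha>0$ (hence $B<0$) on a set of positive measure. Your pointwise inequality $(\Vhat*|\alpha|)(p)\leq\rhozero\Vhatp$, your bound $A-B\geq p^2+c_0$, and your claim $E\geq c_0$ all fail in that scenario. (Also, the line ``$B=\rhozero\Vhatp\geq0$ where $\gp=0$'' is wrong as stated: $(\Vhat*\alpha)(p)$ need not vanish when $\alpha(p)=0$. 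The correct fact, obtained by varying $\alpha$ near $0$, is $B(p)=0$ there.) Proving $\alpha\leq0$ under the stated hypotheses on $V$ is not obvious; the simple replacement $\alpha\mapsto-|\alpha|$ lowers the linear term but raises the quadratic one, so it does not settle the sign.

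The paper avoids this issue entirely. For (2) it compares the minimizer with the trial states $(\eta\gamma,\eta\alpha,\rhozero+(1-\eta)\rho_\gamma)$ and sends $\eta\to1$, obtaining
\[
\rhozero\geq\rho_\gamma+\frac{\intp p^2\gp+\tfrac12\langle\gamma,\Vhat*\gamma\rangle}{-\intp\Vhatp(\gp+\ap)}>\rho_\gamma,
\]
which uses no sign information on $\alpha$. For (3) it bounds $A\pm B$ from below on $\{|p|\leq P_0\}$ by comparing with the shifted states $(\gamma(\cdot-p),\pm\alpha(\cdot-p),\rhozero)$, again sign-free; the inequality $\rhozero>\rho_\gamma$ from (2) is precisely what converts that comparison into a positive lower bound. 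If you can supply a proof that every minimizer satisfies $\alpha\leq0$, your route to (2) and (3) is shorter and more explicit than the paper's; without it, the trial-state comparisons are the safer path.
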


The strategy of using the minimizers for $T>0$ to obtain a minimizer at $T=0$ as presented in \cite{bogoliubovenergy} requires stricter conditions for the interaction potential: in addition to the properties stated in Theorem \ref{theoremExistence}, the proof requires $V$ and $\Vhat$ to be radial as well as that $\Vhat \in C^3(\RRdrei)$ with $\nabla \Vhat\in \ltwo$ and that all derivatives of $\Vhat$ up to third order are bounded.
The key ideas in the new argument are the following: First, we use statements from \cite{bogoliubovenergy} to see that we can restrict the minimization problem to fixed densities. Namely, we decompose the full minimization into a minimization of the function
\begin{equation}
\label{eqdefsmallf}
f(\lambda,\rhozero)\coloneqq\inf_{\substack{\gamma, \alpha\\(\gamma,\alpha,\rhozero)\in \mathcal{D}, \rho_\gamma =\lambda}} \functional
\end{equation}
with respect to the densities. Introducing a cutoff makes the problem a lot easier: For the restricted domains
\begin{equation}\label{eqdefrestricteddomain}
\mathcal{D}_\kappa \coloneqq \left\{(\gamma,\alpha,\rhozero)\in\mathcal{D}\mid \gp\leq \kappa \right\}
\end{equation}
it is rather easy to find minimizers called $\triplekappa$. The bound used here is slightly simpler than the $\kappa/p^2$ cutoff chosen in \cite{bogoliubovenergy} for nonzero temperature. The idea is to send $\kappa$ to infinity and use the fact that $\triplekappa$ forms a minimizing sequence of the full problem to obtain a minimizer for it. This is where we introduce a new approach. We analyze the minimizers on the restricted domain for large $\kappa$; most importantly, we see that they satisfy $\alphakappa^2=\gammakappa^2+\gammakappa$ almost everywhere. This property allows to reduce the problem to a minimization over pure states corresponding to a minimization in $\alpha$ and $\rhozero$. Euler-Lagrange equations for this new minimization problem yield important insights. Modifying the minimizers $\triplekappa$ and comparing the energies enables us to show the uniform bound $\gammakappa \leq C/(1+p^4)$ for a subsequence. We conclude that if $\kappa$ is large enough such that the cutoff is strictly larger than this universal bound, then the minimizer on this restricted domain is in fact a minimizer of the full problem.

\section{Preliminaries}

We begin the proof of the existence of minimizers by recalling some basic statements from \cite{bogoliubovenergy}. 
Only consider $\mu > 0$ in the following as for the other case it is easy to see that the Hamiltonian that yields the functional is positive, and therefore, the vacuum is the minimizer (compare \cite[p. 35]{bogoliubovenergy}).

First, we see a lower bound of the functional. In this proof and in several other calculations, we use the fact that $\alpha^2\leq \gamma^2 +\gamma$ and $\gamma\in\lone$ imply $\alpha\in\lone + \ltwo$ as can be seen by writing $\alpha =\alpha \mathbbm{1}_{\{\gamma \leq 1\}}+ \alpha \mathbbm{1}_{\{\gamma > 1\}}$.
\begin{lemma}[Lower bound]
\label{lemmalowerbound}
There are constants $C, \epsilon >0$ such that
\begin{equation}\label{eqlowerbound}
\functional\geq  \intp p^2 \gp + \epsilon (\rho_0^2 +\rho_\gamma^2) -C
\end{equation}
for any $\triple\in\mathcal{D}$.
\end{lemma}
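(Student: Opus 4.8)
The plan is to bound each of the five terms in \eqref{eqfunctional} from below, keeping the kinetic term $\intp p^2\gp$ intact and extracting a positive multiple of $\rho_0^2+\rho_\gamma^2$ at the cost of an additive constant. The terms that are manifestly harmless are $\intp p^2\gp$ (kept as is) and $\frac{\Vhat(0)}{2}\rho^2$, which is $\geq 0$ since $\Vhat\geq 0$. For the linear-in-density term $-\mu\rho = -\mu\rho_0-\mu\rho_\gamma$, I would absorb it later by Young's inequality against the quadratic terms. The genuinely dangerous pieces are the two interaction terms involving $\alpha$, which need not be sign-definite because $\alpha$ is only real-valued, not positive.

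For the term $\rhozero\intp\Vhatp(\gp+\ap)$, the part with $\gp$ is $\geq 0$ by positivity of $\Vhat$ and $\gamma$. For the $\alpha$-part, I would use $|\ap|\leq\sqrt{\gp^2+\gp}\leq\gp+\sqrt{\gp}$ together with $\Vhat\in L^1\cap L^\infty$. Writing $\alpha=\alpha\1_{\{\gamma\leq1\}}+\alpha\1_{\{\gamma>1\}}$ as suggested in the text, on $\{\gamma>1\}$ we have $|\alpha|\leq\sqrt{2}\,\gamma$, so $\rhozero\int\Vhatp|\ap|\1_{\{\gamma>1\}}\leq\sqrt2\,\linfnorm{\Vhat}\rhozero\rho_\gamma$; on $\{\gamma\leq1\}$ we have $|\alpha|\leq\sqrt{2\gamma}\leq\sqrt2$, so this contribution is bounded by $\sqrt2\,\lonenorm{\Vhat}\rhozero$, which is linear in $\rhozero$. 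Both the $\rhozero\rho_\gamma$ term and the linear term are then controlled by $\eps'(\rho_0^2+\rho_\gamma^2)+C$ via Young. The same splitting handles the double integral $\frac12\intpq\Vhatpq(\gp\gq+\ap\aq)$: the $\gamma\gamma$ part is $\geq0$; for $\ap\aq$ I bound $|\ap\aq|$ by a sum of terms $\gp\gq$, $\gp\sqrt{\gq}$, $\sqrt{\gp}\sqrt{\gq}$ on the various regions, and since $\Vhatpq\leq\linfnorm{\Vhat}$ the $\gamma\gamma$-type remainder is $\leq\linfnorm{\Vhat}\rho_\gamma^2/2$, while the terms containing a $\sqrt{\gamma}$ factor on the region $\{\gamma\leq1\}$ are bounded using $\sqrt{\gp}\leq\1_{\{\gamma\leq1\}}+\gp$ and $\gamma\in L^1$, yielding at worst a constant times $\rho_\gamma$ plus a constant.

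Collecting everything, the functional is bounded below by $\intp p^2\gp + \frac{\Vhat(0)}{2}\rho^2 - C_1(\rho_0\rho_\gamma + \rho_0 + \rho_\gamma + 1)$ for an explicit $C_1$ depending only on $\|\Vhat\|_{L^1}$ and $\|\Vhat\|_{L^\infty}$. The last step is elementary real-variable bookkeeping: choose $\eps>0$ small enough that $C_1(\rho_0\rho_\gamma+\rho_0+\rho_\gamma)\leq\frac12(\rho_0^2+\rho_\gamma^2)+\eps(\rho_0^2+\rho_\gamma^2)+C_2$; combined with discarding $\frac{\Vhat(0)}{2}\rho^2\ge0$, this gives \eqref{eqlowerbound}. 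I expect the only mildly delicate point to be the systematic handling of the $\sqrt{\gamma}$ contributions in the double integral, where one must keep the two factors' region decompositions organized so that every resulting term is either quadratic in the densities (absorbable by Young) or integrable against $\gamma\in L^1$ (hence a constant); no single step is hard, but it is easy to be sloppy about which $L^1$ or $L^\infty$ norm of $\Vhat$ is being used where.
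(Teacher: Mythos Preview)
Your approach has a genuine gap in the treatment of the quadratic $\alpha$-term. You propose to bound $\frac{1}{2}\intpq\Vhatpq\,\ap\aq$ from below via $|\alpha|\leq\gamma+\sqrt{\gamma}$, but on the region $\{\gamma>1\}\times\{\gamma>1\}$, where $|\alpha|\leq\sqrt{2}\,\gamma$, this already produces a negative contribution of order $-\Vhat(0)\rho_\gamma^2$. The only positive $\rho_\gamma^2$ piece available is the $\frac{\Vhat(0)}{2}\rho_\gamma^2$ coming from $\frac{\Vhat(0)}{2}\rho^2$, which is strictly smaller; the cross term $\Vhat(0)\rho_0\rho_\gamma$ cannot rescue this, since it is itself overrun by the $-\sqrt{2}\,\Vhat(0)\rho_0\rho_\gamma$ that your own bound on the linear $\alpha$-term generates. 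Your stated collection $\frac{\Vhat(0)}{2}\rho^2-C_1(\rho_0\rho_\gamma+\rho_0+\rho_\gamma+1)$ simply omits this $\rho_\gamma^2$ deficit, and once it is reinstated the final Young-type bookkeeping cannot yield $\epsilon\rho_\gamma^2$ for any $\epsilon>0$.

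The one-line fix, which is exactly what the paper does, is to observe that $V\geq 0$ makes $\Vhat$ a function of positive type: $\intpq\Vhatpq\,\ap\aq=\int V(x)\,|\check{\alpha}(x)|^2\,dx\geq 0$. Hence \emph{both} quadratic convolution terms are nonnegative and may be dropped outright. Only the linear term $\rho_0\intp\Vhatp\ap$ then needs estimating; the paper uses $|\alpha|\leq\gamma+\sqrt{\gamma}$ there to get $\rho_0\intp\Vhatp\ap\geq-\Vhat(0)\rho_0\rho_\gamma-\ltwonorm{\Vhat}\rho_0\sqrt{\rho_\gamma}$, and the first piece cancels exactly against the cross term in $\frac{\Vhat(0)}{2}\rho^2$, leaving $\frac{\Vhat(0)}{2}(\rho_0^2+\rho_\gamma^2)$ minus lower-order terms that Young's inequality absorbs.
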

\begin{proof}
The proof is similar to Lemma 4.1 in \cite{bogoliubovenergy}, but it includes the density of the Bose-Einstein condensate. Note that the first term in \eqref{eqlowerbound} is exactly the kinetic term as given in the functional and that the quadratic terms are positive since $V\geq 0$. By $\alpha^2\leq \gamma^2+ \gamma$, we have $\abs{\alpha}\leq \gamma +\sqrt{\gamma}$, and therefore, we can bound the rest of the terms
\begin{equation}
\begin{split}
			&\rhozero \intp \Vhatp \ap -\mu \rho +\frac{\Vhat(0)}{2}\rho^2\\
			&\geq -\rhozero\Vhat(0)\rho_\gamma -\rhozero\ltwonorm{\Vhat}\sqrt{\rho_\gamma} -\mu \rho +\frac{\Vhat(0)}{2}\rho^2\\
			&=\frac{\Vhat(0)}{2}\left(\rho_0^2 +\rho_\gamma^2\right)  -\mu (\rhozero +\rho_\gamma)  -\rhozero\ltwonorm{\Vhat}\sqrt{\rho_\gamma}\\
			&\geq\frac{\Vhat(0)}{2}\left(\rho_0^2 +\rho_\gamma^2\right) -\mu (\rhozero+\rho_\gamma)  -\frac{\ltwonorm{\Vhat}}{2}\left( K \rho_\gamma + \frac{1}{K}\rhozero^2\right)\\
			&\geq \epsilon (\rho_0^2 +\rho_\gamma^2) - C
\end{split}
\end{equation}
where we choose $K>0$ large enough such that $\rhozero^2$ has a positive prefactor.
This shows that the functional is bounded from below and also that the densities have to be bounded for minimizing sequences.
\end{proof}

\begin{remark}
An easier way to obtain the lower bound is using that $\ap +\gp\geq -1/2$ since $\alpha^2 \leq \gamma^2 +\gamma$ and that $\Vhat\in \lone$. But the calculation here shows that we do not need the condition $\Vhat \in \lone$ for handling the linear term but only for ensuring that the contribution of the $L^2$ part of $\alpha$ to the quadratic term is well-defined. 
\end{remark}

We see that we can restrict the minimization to states with a fixed density.
\begin{lemma}[Convexity and continuity of the functional]
\label{lemmaconvexity}
The functional $\functional$ is jointly convex in $(\gamma,\alpha)$ and $\{\triple\in \mathcal{D}\vert \rho_0=\bar{\rho}_0\}$ is a convex set for every fixed $\bar{\rho}_0\geq 0$. Furthermore, $f(\lambda,\rhozero)$ as defined in equation \eqref{eqdefsmallf} is continuous in both variables, strictly convex in $\lambda$ and goes to infinity if one of the variables goes to infinity, so that
\begin{equation}
\inf_{(\gamma,\alpha,\rhozero)\in \mathcal{D}} \functional = f(\lambda,\rhozero)
\end{equation}
for some $\lambda$, $\rhozero$.
\end{lemma}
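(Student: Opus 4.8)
The plan is to check the listed properties one at a time, taking coercivity from Lemma~\ref{lemmalowerbound} and convexity from the sign conditions on $V$ and $\Vhat$. For the joint convexity in $(\gamma,\alpha)$ with $\rhozero$ fixed, I would write $\functional$ as the sum of the affine terms $\intp p^2\gp$ and $-\mu\rho+\rhozero\intp\Vhatp(\gp+\ap)$ (affine in $(\gamma,\alpha)$ because $\rho$ is affine in $\gamma$), the term $\tfrac{\Vhat(0)}{2}\rho^2$, which is convex since $\Vhat(0)\geq0$ and $\rho$ is affine, and the quadratic form $\tfrac12\intpq\Vhatpq(\gp\gq+\ap\aq)$. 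This last term is a nonnegative quadratic form: by Fubini and $\Vhat(k)=\int V(x)e^{-ik\cdot x}\,dx$ one gets $\intpq\Vhatpq\gp\gq=\int V(x)|\hat\gamma(x)|^2\,dx\geq0$ because $V\geq0$, and likewise for $\alpha$ (using $\alpha\in\lone+\ltwo$ and $V\in\lone\cap\linf$ to justify the identity). A sum of affine and convex maps is convex. For the domain slice, the only non-affine constraint defining $\mathcal{D}$ is $\ap^2\leq\gp^2+\gp$, i.e.\ $|\ap|\leq\sqrt{\gp^2+\gp}$; since $t\mapsto\sqrt{t^2+t}$ is concave on $[0,\infty)$ (second derivative $-\tfrac14(t^2+t)^{-3/2}<0$), the planar set $\{(a,g):g\geq0,\ |a|\leq\sqrt{g^2+g}\}$ is convex, hence so is the set of $(\gamma,\alpha)$ satisfying the constraint pointwise a.e., and intersecting with the linear conditions $\gamma\in L^1(\RRdrei,(1+p^2)\,dp)$ and $\rhozero=\bar\rho_0$ preserves convexity.

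For the properties of $f$, restricting $\rho_\gamma=\lambda$ fixes $\rho=\rhozero+\lambda$, so
\[
f(\lambda,\rhozero)=\tfrac{\Vhat(0)}{2}(\rhozero+\lambda)^2-\mu(\rhozero+\lambda)+\inf_{(\gamma,\alpha)\in\mathcal{C}_\lambda}\Big[\intp p^2\gp+\rhozero\intp\Vhatp(\gp+\ap)+\tfrac12\intpq\Vhatpq(\gp\gq+\ap\aq)\Big],
\]
where $\mathcal{C}_\lambda:=\{(\gamma,\alpha):(\gamma,\alpha,\rhozero)\in\mathcal{D},\ \rho_\gamma=\lambda\}$ is a convex set not depending on $\rhozero$. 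The bracket is jointly convex in $(\gamma,\alpha)$ and $\int\gamma=\lambda$ is an affine constraint, so the infimum is a convex function of $\lambda$ (partial minimization of a jointly convex function); adding $\tfrac{\Vhat(0)}{2}(\rhozero+\lambda)^2$, which is strictly convex in $\lambda$ since $\Vhat(0)=\int V(x)\,dx>0$ (as $V\geq0$, $V\not\equiv0$, $V\in\lone$), yields strict convexity of $f$ in $\lambda$. For fixed $\lambda$ the bracket is affine in $\rhozero$ for every $(\gamma,\alpha)$, with slope $\intp\Vhatp(\gp+\ap)$ whose modulus is bounded, uniformly over $\mathcal{C}_\lambda$, by $2\linfnorm{\Vhat}\lambda+\ltwonorm{\Vhat}\sqrt{\lambda}$ (using $\lonenorm{\gamma}=\lambda$, $|\ap|\leq\gp+\sqrt{\gp}$, and $\Vhat\in\lone\cap\linf\subset\ltwo$); hence the infimum is concave and Lipschitz in $\rhozero$, in particular continuous.

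Continuity of $f$ in $\lambda$ on $(0,\infty)$ is immediate from its convexity there; at $\lambda=0$ and $\rhozero=0$ I would obtain the upper bound by testing with $\alpha=0$ and a $\gamma$ of mass $\lambda$ concentrated near $p=0$, and the matching lower bound from Lemma~\ref{lemmalowerbound} — these boundary estimates are precisely the ones worked out in \cite{bogoliubovenergy}. Lemma~\ref{lemmalowerbound} also gives $f(\lambda,\rhozero)\geq\epsilon(\rhozero^2+\lambda^2)-C$, so $f$ is coercive on $[0,\infty)^2$. Since $\inf_{(\gamma,\alpha,\rhozero)\in\mathcal{D}}\functional=\inf_{\lambda,\rhozero\geq0}f(\lambda,\rhozero)$ by rewriting the infimum, and a continuous coercive function attains its minimum on the closed set $[0,\infty)^2$, the final assertion follows.

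The step I expect to need the most care is the continuity of $f$ up to the boundary of its domain, and in particular the control of the inner infimum in \eqref{eqdefsmallf} as $\lambda\to0$ when $\gamma$ is allowed to spread out in momentum space; the remaining points are direct consequences of $V,\Vhat\geq0$, the concavity of $t\mapsto\sqrt{t^2+t}$, and Lemma~\ref{lemmalowerbound}.
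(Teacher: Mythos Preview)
Your proposal is correct. The paper's own proof is simply a citation of Proposition~4.1 and Remark~4.4 in \cite{bogoliubovenergy}, observing that adding the term $-\mu\rho$ does not affect the relevant properties; you have instead written out the argument directly, and the ingredients you use---positivity of the quadratic form via $V\geq0$ and Fourier inversion, concavity of $t\mapsto\sqrt{t^2+t}$ for convexity of the domain slice, partial minimization under the affine constraint $\rho_\gamma=\lambda$ for convexity of $f$ in $\lambda$, strict convexity coming from $\Vhat(0)>0$, the uniform Lipschitz bound in $\rhozero$, and coercivity from Lemma~\ref{lemmalowerbound}---constitute the standard route and are consistent with the cited reference. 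One small point: your Lipschitz estimate in $\rhozero$ together with the convexity (hence interior continuity) in $\lambda$ actually yields joint continuity on $(0,\infty)\times[0,\infty)$, since the Lipschitz constant $2\linfnorm{\Vhat}\lambda+\ltwonorm{\Vhat}\sqrt{\lambda}$ is locally bounded in $\lambda$; this is what one needs to conclude that the coercive $f$ attains its infimum, and it is worth stating explicitly.
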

\begin{proof}
This was shown in Proposition 4.1 in \cite{bogoliubovenergy} without the term $-\mu \rho$ in the functional which does not change the properties as also mentioned in \cite[Remark 4.4]{bogoliubovenergy}.
\end{proof}

\section{Restricted Problem}
Now we analyze the restricted problem on $\mathcal{D}_\kappa = \left\{(\gamma,\alpha,\rhozero)\in\mathcal{D}\vert \gp\leq \kappa \right\}$. First, observe that the previous lemma also holds in this case.

\begin{lemma}
\label{lemmacontinuityrestricted}
Define
\begin{equation}
f_\kappa(\lambda,\rhozero)\coloneqq\inf_{\substack{\gamma, \alpha\\(\gamma,\alpha,\rhozero)\in \mathcal{D}_\kappa, \rho_\gamma =\lambda}} \functional.
\end{equation}
Then we have
\begin{equation}
\inf_{(\gamma,\alpha,\rhozero)\in \mathcal{D}_\kappa}\functional = f_\kappa(\lambda^\kappa,\rhozero^\kappa)
\end{equation}
for some $\lambda^\kappa$, $\rhozero^\kappa$.
\end{lemma}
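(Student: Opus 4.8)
The plan is to run the argument of Lemma~\ref{lemmaconvexity} again, but on the smaller domain $\mathcal{D}_\kappa$, after checking that the cutoff constraint $\gamma(p)\le\kappa$ is harmless for every structural property that was used there. Concretely, I would first note that for each fixed $\bar\rho_0\ge 0$ the slice $\{(\gamma,\alpha,\rhozero)\in\mathcal{D}_\kappa\mid\rhozero=\bar\rho_0\}$ is convex, being the intersection of the convex set appearing in Lemma~\ref{lemmaconvexity} with the convex set $\{\gamma\le\kappa\}$ (the pointwise bound $\gamma(p)\le\kappa$ is preserved under convex combinations). Since $\functional$ is jointly convex in $(\gamma,\alpha)$, partial minimization over such a slice shows that $f_\kappa(\cdot,\rhozero)$ is convex (in fact strictly convex, by the same reasoning as in Lemma~\ref{lemmaconvexity}) for each fixed $\rhozero$. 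Its effective domain is all of $[0,\infty)^2$: for any $\lambda\ge 0$ one may take $\alpha=0$ together with any $\gamma$ satisfying $\gamma\le\kappa$ and $\int\gamma=\lambda$, so the feasible set is never empty.

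Next I would record coercivity. The lower bound of Lemma~\ref{lemmalowerbound} was established on all of $\mathcal{D}$, hence a fortiori on $\mathcal{D}_\kappa\subset\mathcal{D}$, and on the slice $\rho_\gamma=\lambda$ it reads
\[
f_\kappa(\lambda,\rhozero)\ \ge\ \epsilon\,(\rhozero^2+\lambda^2)-C,
\]
so $f_\kappa(\lambda,\rhozero)\to\infty$ as $\lambda+\rhozero\to\infty$; in particular $f_\kappa>-\infty$, so the infimum defining it is a genuine real number. Continuity of $f_\kappa$ in both variables I would obtain exactly as in Proposition~4.1 of \cite{bogoliubovenergy}: on the interior of $[0,\infty)^2$ it is automatic for a finite-valued convex function, and the behaviour up to the boundary is handled by the same approximation argument as in the unrestricted case, which only becomes easier with the cutoff in place. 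Having continuity (equivalently here, lower semicontinuity) together with the coercivity above, the restriction of $f_\kappa$ to a large box $[0,R]^2$ attains its minimum by compactness, and for $R$ large enough (so that $\epsilon R^2-C$ exceeds, say, $f_\kappa(0,0)$) this is the global minimum. Combining this with the trivial partition identity
\[
\inf_{(\gamma,\alpha,\rhozero)\in\mathcal{D}_\kappa}\functional
=\inf_{\lambda,\rhozero\ge 0}\;\inf_{\substack{(\gamma,\alpha,\rhozero)\in\mathcal{D}_\kappa\\ \rho_\gamma=\lambda}}\functional
=\inf_{\lambda,\rhozero\ge 0} f_\kappa(\lambda,\rhozero),
\]
the global minimum is attained at some $(\lambda^\kappa,\rhozero^\kappa)$, which is the assertion.

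I do not expect a serious obstacle here; the only genuinely technical point is the continuity of $f_\kappa$ up to the boundary of $[0,\infty)^2$, everything else being a direct transcription of Lemmas~\ref{lemmalowerbound} and~\ref{lemmaconvexity}. If one prefers to avoid the boundary analysis altogether, it suffices to invoke the abstract fact that a proper, convex, lower semicontinuous, coercive function on the closed convex set $[0,\infty)^2$ attains its infimum; lower semicontinuity then follows from finiteness and convexity on the interior together with the explicit quadratic lower bound, which is more than enough.
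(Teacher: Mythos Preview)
Your proposal is correct and follows exactly the route the paper takes: the paper simply remarks that ``the previous lemma also holds in this case'' (i.e.\ Lemma~\ref{lemmaconvexity} carries over to $\mathcal{D}_\kappa$), and your argument is a faithful expansion of that observation---convexity of the slice $\{\gamma\le\kappa\}$, coercivity from Lemma~\ref{lemmalowerbound}, and the continuity/attainment reasoning of Proposition~4.1 in \cite{bogoliubovenergy}. One small caution: when you write that continuity on the interior ``is automatic for a finite-valued convex function'', note that Lemma~\ref{lemmaconvexity} only asserts strict convexity of $f$ in $\lambda$, not joint convexity in $(\lambda,\rhozero)$, so continuity in $\rhozero$ is not an immediate corollary of convexity alone; but since you explicitly defer to the proof in \cite{bogoliubovenergy} (which handles this directly) and also sketch the lower-semicontinuity-plus-coercivity alternative, there is no real gap.
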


We can obtain a very useful upper bound on the energy by considering the trial state
\begin{equation}\label{eqtrialstateupperbpund}
\bar{\gamma}\coloneqq \gamma_0 \mathbbm{1}_{\{\abs{p}\leq \epsilon\}},\quad
\bar{\alpha}\coloneqq -\sqrt{\gamma_0^2+\gamma_0} \mathbbm{1}_{\{\abs{p}\leq \epsilon\}},\quad
\bar{\rho}_0\coloneqq \frac{\mu}{\Vhat (0)}-\gamma_0 \abs{\{\abs{p}\leq \epsilon\}}
\end{equation}
and choosing $\gamma_0$ large and $\epsilon$ small enough (in particular such that $\bar{\rho}_0>0$). Note that the total density is fixed at $\mu/\Vhat(0)$, which gives an energy contribution of $-\mu^2/(2\Vhat(0))$.

\begin{lemma}[Energy upper bound]
\label{lemmaupperboundenergy}
There is a constant $\kappa_0$ such that
\begin{equation}
\inf_{(\gamma,\alpha,\rhozero)\in \mathcal{D}_\kappa}\functional < \frac{-\mu^2}{2\Vhat(0)}
\end{equation}
for all $\kappa \geq \kappa_0$. This bound is also correct for the full minimization problem, i.e.
\begin{equation}
\inf_{(\gamma,\alpha,\rhozero)\in \mathcal{D}}\functional < \frac{-\mu^2}{2\Vhat(0)}.
\end{equation}
\end{lemma}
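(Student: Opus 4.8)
The plan is to evaluate the functional on the explicit trial state \eqref{eqtrialstateupperbpund} and show that for a suitable choice of parameters the energy drops strictly below $-\mu^2/(2\Vhat(0))$. First I would note that, by construction, the total density is fixed: $\bar\rho = \bar\rho_0 + \rho_{\bar\gamma} = \mu/\Vhat(0)$, so the three terms $-\mu\bar\rho + \tfrac{\Vhat(0)}{2}\bar\rho^2$ collapse exactly to $-\mu^2/(2\Vhat(0))$. Hence it suffices to show that the remaining terms, namely the kinetic term $\int p^2\bar\gamma$, the linear term $\bar\rho_0\int \Vhatp(\bar\gamma+\bar\alpha)$, and the quadratic term $\tfrac12\iint \Vhatpq(\bar\gamma\bar\gamma + \bar\alpha\bar\alpha)$, sum to something strictly negative for an appropriate choice of $\gamma_0$ and $\eps$.

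The key observation is the sign and size of the linear term. On the support $\{|p|\le\eps\}$ we have $\bar\gamma + \bar\alpha = \gamma_0 - \sqrt{\gamma_0^2+\gamma_0}$, which is negative and, for $\gamma_0$ large, behaves like $\gamma_0 - \gamma_0\sqrt{1+1/\gamma_0} \approx -\tfrac12$; more precisely it is bounded above by $-c$ for an absolute constant $c>0$ once $\gamma_0\ge 1$, say. Since $\Vhat$ is continuous with $\Vhat(0)>0$ (because $V\ge0$, $V\not\equiv0$), for $\eps$ small enough $\Vhatp \ge \tfrac12\Vhat(0)$ on $\{|p|\le\eps\}$, so the linear term is at most $-c'\,\bar\rho_0\,|\{|p|\le\eps\}|$ for some $c'>0$; with $\bar\rho_0$ close to $\mu/\Vhat(0)>0$ this is a strictly negative quantity of order $\eps^3$. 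Meanwhile the kinetic term is bounded by $\eps^2\gamma_0|\{|p|\le\eps\}| = O(\eps^5\gamma_0)$, and the quadratic term, using $|\bar\alpha|\le\gamma_0+\sqrt{\gamma_0}$ and $\Vhat\in\linf$, is bounded by $C(\gamma_0+\sqrt{\gamma_0})^2 |\{|p|\le\eps\}|^2 = O(\eps^6\gamma_0^2)$. So if we first fix $\gamma_0$ (large enough that $\bar\gamma+\bar\alpha\le -c$ and that $\bar\rho_0>0$ holds for small $\eps$) and then send $\eps\to 0$, the negative linear contribution of order $\eps^3$ dominates the positive kinetic and quadratic contributions of order $\eps^5$ and $\eps^6$. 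Choosing $\eps$ small enough therefore makes the sum of these three terms strictly negative, which gives the first inequality for the full problem.

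To get the statement for the cutoff domains, I would simply observe that the trial state $(\bar\gamma,\bar\alpha,\bar\rho_0)$ lies in $\mathcal{D}_\kappa$ as soon as $\kappa \ge \gamma_0$: indeed $\bar\gamma$ takes only the values $\gamma_0$ and $0$. Thus setting $\kappa_0 := \gamma_0$ (with $\gamma_0$ the value fixed above) yields $\inf_{\mathcal{D}_\kappa}\FF \le \FF(\bar\gamma,\bar\alpha,\bar\rho_0) < -\mu^2/(2\Vhat(0))$ for all $\kappa\ge\kappa_0$, and since $\mathcal{D}_\kappa\subseteq\mathcal{D}$ the same trial state gives the bound for the full problem as well.

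I expect the only mildly delicate point to be bookkeeping the order of the quantification: one must choose $\gamma_0$ \emph{first}, purely to make the pointwise quantity $\gamma_0 - \sqrt{\gamma_0^2+\gamma_0}$ bounded away from zero and to make $\bar\rho_0$ positive for all sufficiently small $\eps$, and only \emph{afterwards} send $\eps\to0$ so that the $\eps^3$ linear gain beats the higher-order losses. There is no real analytic obstacle here — continuity of $\Vhat$ at $0$ together with $\Vhat(0)>0$ and the elementary asymptotics of $\sqrt{\gamma_0^2+\gamma_0}$ are all that is needed — so the lemma is essentially a computation once the trial state is in hand.
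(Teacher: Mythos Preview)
Your argument is correct and follows the same route as the paper: evaluate $\FF$ on the trial state \eqref{eqtrialstateupperbpund}, observe that the density terms contribute exactly $-\mu^2/(2\Vhat(0))$, and check that for fixed $\gamma_0>0$ and $\eps$ small the negative linear term of order $\eps^3$ dominates the kinetic and quadratic terms of order $\eps^5$ and $\eps^6$; the paper simply cites this computation from \cite{bogoliubovenergy} rather than writing it out. One cosmetic remark: you do not actually need $\gamma_0$ large, since $\gamma_0-\sqrt{\gamma_0^2+\gamma_0}=-\gamma_0/(\gamma_0+\sqrt{\gamma_0^2+\gamma_0})$ is a fixed negative number for any $\gamma_0>0$, so e.g.\ $\gamma_0=1$ already works.
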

\begin{proof}
A calculation in \cite{bogoliubovenergy} (p. 39f) shows
\begin{equation}
\functionalflex{\bar{\gamma}}{\bar{\alpha}}{\bar{\rho}_0} < \frac{-\mu^2}{2\Vhat(0)}
\end{equation}
for certain choices of the parameters in equation \eqref{eqtrialstateupperbpund}. It is clear that these trial states are in the restricted domains if the cutoff is large enough.
\end{proof}

\begin{theorem}[Existence of minimizers for the restricted problem]
\label{theoremrestrictedproblem}
There is a minimizer $\triplekappa$ for the restricted problem:
\begin{equation}
\inf_{(\gamma,\alpha,\rhozero)\in \mathcal{D}_\kappa}\functional = \functionalflex{\gammakappa}{\alphakappa}{\rhozerokappa}.
\end{equation}
\end{theorem}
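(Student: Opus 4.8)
The plan is a textbook application of the direct method of the calculus of variations; the reason the restricted problem is tractable is that the cutoff $\gamma\le\kappa$ in $\mathcal{D}_\kappa$ upgrades the merely-$L^1$ a priori control of $\gamma$ to $L^2$-control, which restores weak compactness. Concretely, pick a minimizing sequence $(\gamma_n,\alpha_n,\rho_{0,n})\in\mathcal{D}_\kappa$. The infimum $I_\kappa\coloneqq\inf_{\mathcal{D}_\kappa}\FF$ is finite, being bounded below by Lemma~\ref{lemmalowerbound} and above by $0$, so $\functionalflex{\gamma_n}{\alpha_n}{\rho_{0,n}}\le I_\kappa+1$ once $n$ is large, and Lemma~\ref{lemmalowerbound} then supplies uniform bounds on $\intp p^2\gamma_n$, on $\rho_{0,n}$, and on $\intp\gamma_n$. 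Since in addition $0\le\gamma_n\le\kappa$, we get $\|\gamma_n\|_{\ltwo}^2\le\kappa\intp\gamma_n\le C$ and, via $\alpha_n^2\le\gamma_n^2+\gamma_n$, also $\|\alpha_n\|_{\ltwo}\le C$. Passing to a subsequence, $\rho_{0,n}\to\tilde\rho_0\ge 0$, and $\gamma_n\wto\tilde\gamma$, $\alpha_n\wto\tilde\alpha$ in $\ltwo$. (One could instead first invoke Lemma~\ref{lemmacontinuityrestricted} to reduce to fixed $\rho_0$ and $\rho_\gamma$, freezing all density-dependent terms; the remaining argument is unchanged.)

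The next step is to check that the weak limit $(\tilde\gamma,\tilde\alpha,\tilde\rho_0)$ again lies in $\mathcal{D}_\kappa$. The bounds $0\le\tilde\gamma\le\kappa$ survive because $\{f:0\le f\le\kappa\}$ is convex and norm-closed in $\ltwo$, hence weakly closed; and $\intp(1+p^2)\tilde\gamma\le\liminf_n\intp(1+p^2)\gamma_n<\ii$ follows by testing the weak convergence against $(1+p^2)\mathbbm{1}_{\{\abs p\le R\}}\in\ltwo$ and letting $R\to\ii$, so $\tilde\gamma\in L^1(\RRdrei,(1+p^2)dp)$. The one genuinely delicate constraint is the pointwise inequality $\tilde\alpha^2\le\tilde\gamma^2+\tilde\gamma$, which is not convex as written; but $\{(\gamma,\alpha):\gamma\ge 0,\ \alpha^2\le\gamma^2+\gamma\}$ coincides with the set of pairs for which the matrix $\bigl(\begin{smallmatrix}\gamma&\alpha\\ \alpha&1+\gamma\end{smallmatrix}\bigr)$ is positive semidefinite almost everywhere, which is a convex, norm-closed — hence weakly closed — subset of $\ltwo\times\ltwo$. (This is exactly the convexity already recorded in Lemma~\ref{lemmaconvexity}.) So the constraint passes to the limit and $(\tilde\gamma,\tilde\alpha,\tilde\rho_0)\in\mathcal{D}_\kappa$.

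It remains to prove weak lower semicontinuity, $\functionalflex{\tilde\gamma}{\tilde\alpha}{\tilde\rho_0}\le\liminf_n\functionalflex{\gamma_n}{\alpha_n}{\rho_{0,n}}=I_\kappa$; since the limit lies in $\mathcal{D}_\kappa$, the reverse inequality is automatic and $(\tilde\gamma,\tilde\alpha,\tilde\rho_0)$ is then a minimizer. I would argue term by term. The kinetic term $\intp p^2\gamma$ is weakly lsc (same testing argument as above). From $\int_{\abs p\ge R}\gamma_n\le C/R^2$ one gets tightness, whence $\intp\gamma_n\to\intp\tilde\gamma$, so $-\mu\rho$ and $\tfrac{\Vhat(0)}{2}\rho^2$ converge; since $\Vhat\in\lone\cap\linf\subset\ltwo$, weak $\ltwo$-convergence gives $\intp\Vhatp\gamma_n\to\intp\Vhatp\tilde\gamma$ and likewise with $\alpha$, so together with $\rho_{0,n}\to\tilde\rho_0$ the linear term converges. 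For the quadratic interaction, $\gamma\mapsto\intpq\Vhatpq\gp\gq=\pscal{\gamma,\Vhat\ast\gamma}_{\ltwo}$ is a bounded quadratic form on $\ltwo$ by Young's inequality ($\|\Vhat\ast\gamma\|_{\ltwo}\le\|\Vhat\|_{\lone}\|\gamma\|_{\ltwo}$) and is nonnegative because $V\ge 0$ (Plancherel), hence convex and continuous and therefore weakly lsc; the identical reasoning applies to the $\alpha$-contribution. Adding the four pieces gives the claim, and the theorem follows.

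The substantive obstacle is structural rather than computational: prior to imposing the cutoff, the natural a priori space for $\gamma$ is the nonreflexive $L^1(\RRdrei,(1+p^2)dp)$, where weak compactness fails — which is precisely why one first establishes existence on $\mathcal{D}_\kappa$. Granting the cutoff, the two steps needing genuine care are the stability of the pointwise bound $\alpha^2\le\gamma^2+\gamma$ under weak limits (handled by recognizing it as a positive-semidefiniteness, hence convex, constraint) and the weak lower semicontinuity of the nonlocal quadratic term (handled by exhibiting it as a nonnegative bounded quadratic form, using $\Vhat\ge 0$ and Young's inequality).
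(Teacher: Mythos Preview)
Your argument is correct. Both proofs follow the direct method, but the execution differs. The paper works in $L^s(\RRdrei)$ for some $s\in(6/5,2)$, extracts a weakly convergent subsequence, then invokes Mazur's lemma to pass to strongly convergent convex combinations (still minimizing by convexity of $\FF$) and a further pointwise-a.e.\ subsequence; the constraints are then verified by Fatou, and the quadratic interaction terms actually \emph{converge} via H\"older and Young applied to the strongly convergent sequence. You instead work directly in $L^2$ (this is where the cutoff $\gamma\le\kappa$ is used), stay with weak convergence throughout, handle the constraint $\alpha^2\le\gamma^2+\gamma$ via its convex positive-semidefiniteness reformulation, and treat the quadratic terms by weak lower semicontinuity of a nonnegative bounded quadratic form (using $V\ge 0$ through Plancherel). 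Your route is somewhat more streamlined and avoids the Mazur detour; the paper's route yields a little more, namely actual convergence rather than mere lower semicontinuity of the quadratic terms, but that extra information is not needed for the theorem.
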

\begin{proof}
The proof is based on Proposition 4.2 in \cite{bogoliubovenergy}, but it is slightly modified due to the different cutoff. First, note that by Lemma \ref{lemmacontinuityrestricted}, we can restrict the minimization problem to a fixed condensate density $\rhozerokappa$. Now take $\gamma_n$ and $\alpha_n$ such that $(\gamma_n,\alpha_n,\rhozerokappa)\in \mathcal{D}_\kappa$ is a minimizing sequence. Lemma \ref{lemmalowerbound} together with any upper bound obtained from a trial state implies the following uniform bound in $n$:
\begin{equation}\label{eqrestricteduniformboundminseq}
\rho_{\gamma, n}\coloneqq \intp \gamma_n(p)\leq C, \quad \intp p^2\gamma_n (p) \leq C.
\end{equation}
The idea is to show boundedness of the minimizing sequence in a reflexive space to get weak convergence of a subsequence. Take $s\in (6/5,2)$. Then we claim that $(\gamma_n,\alpha_n)$ is bounded in $L^s(\RRdrei)\times L^s(\RRdrei)$. Due to the uniform bound of the densities and the cutoff in the $\sup$ norm, $\gamma_n$ is in fact bounded in any $\lp$ space for $p\in[1,\infty]$. Using that $\alpha_n^2\leq \gamma_n^2 +\gamma_n$ implies $\abs{\alpha_n}\leq \sqrt{2}\max (\gamma_n,\sqrt{\gamma_n})$ we see that for $\alpha_n$:
\begin{equation}
\begin{split}
\norm{\alpha_n}_{L^s(\RRdrei)}^s
&= \int_{\{p\mid \gamma_n(p)\leq 1\}} dp\,\abs{\alpha_n(p)}^s + \int_{\{p\mid \gamma_n(p)>1\}} dp\,\abs{\alpha_n(p)}^s\\
&\leq  2^{s/2}\int_{\{p\mid \gamma_n(p)\leq 1\}} dp\,\gamma_n(p)^{s/2} + 2^{s/2} \int_{\{p\mid \gamma_n(p)>1\}} dp\,\gamma_n(p)^s\\
&\leq  2^{s/2}\int_{\{p\mid \gamma_n(p)\leq 1, \abs{p} < 1\}} dp\,\gamma_n(p)^{s/2} +
		2^{s/2}\int_{\{p\mid \gamma_n(p)\leq 1, \abs{p} \geq 1\}} dp\,\gamma_n(p)^{s/2} + C\\
&\leq  2^{s/2}\int_{\{p\mid \abs{p} < 1\}} dp\, +
		2^{s/2}\int_{\{p\mid\abs{p} \geq 1\}} dp\,(p^2\gamma_n(p))^{s/2}\abs{p}^{-s} + C.
\end{split}
\end{equation}
By Hölder's inequality, the second term is bounded by 
\begin{equation}
\norm{(p^2\gamma_n)^{s/2}}_{L^{2/s}(\RRdrei)}\norm{\mathbbm{1}_{\{p\mid\abs{p}\geq 1\}} \abs{p}^{-s}}_{L^{2/(2-s)}(\RRdrei)}.
\end{equation}
Here, the first factor is bounded uniformly due to \eqref{eqrestricteduniformboundminseq}, and the second factor is finite for our choice of $s$. Hence, we have uniform boundedness of $\alpha_n$ in $L^s$. Note that since $\alpha_n^2\leq \gamma_n^2 +\gamma_n\leq C$, $\alpha_n$ is actually bounded in any $L^p$ space with $p>6/5$ by interpolation.
Using the boundedness, we find a subsequence such that
\begin{equation}
(\gamma_n,\alpha_n) \rightharpoonup (\tilde{\gamma},\tilde{\alpha})
\end{equation}
in $L^s(\RRdrei)\times L^s(\RRdrei)$.
Mazur's Lemma states that for every weakly convergent sequence, there are convex combinations of the elements that converge strongly to the same limit (see \cite[p. 61]{Brezis}). Thus, we can replace the $\gamma_n$ and $\alpha_n$ with convex combinations thereof and obtain strong convergence. The new sequence (again named $\gamma_n$ and $\alpha_n$) is still a minimizing sequence by the convexity of $\FF$. Taking another subsequence, we get pointwise convergence almost everywhere; call this sequence again $(\gamma_n,\alpha_n)$. By Fatou's Lemma and pointwise convergence, we get $(\tilde{\gamma},\tilde{\alpha},\rhozerokappa)\in \mathcal{D}_\kappa$. We want to show that this is a minimizer, so we need to prove
\begin{equation}
\lim_{n\to \infty} \functionalflex{\gamma_n}{\alpha_n}{\rhozerokappa}\geq \functionalflex{\tilde{\gamma}}{\tilde{\alpha}}{\rhozerokappa}.
\end{equation} 
We compare the different terms in the energy. Since $(p^2-\mu) \gamma_n(p)$ is bounded below by $-\mu\kappa\mathbbm{1}_{\{\abs{p}\leq\sqrt{\mu}\}}\in L^1(\RRdrei)$, we see that by Fatou's Lemma
\begin{equation}
\liminf_{n\to \infty} \left ( \intp p^2 \gamma_n(p) - \mu \intp \gamma_n(p) \right ) \geq \intp p^2 \tilde{\gamma}(p) - \mu \intp \tilde{\gamma} (p) .
\end{equation}
Fatou's Lemma also implies the desired result for the other density terms. The convergence of the linear terms follows from the strong convergence in $L^s(\RRdrei)$ and Hölder's inequality because $\Vhat\in L^{s/(s-1)}(\RRdrei)$. The quadratic terms also converge because of Hölder's and Young's inequality:
\begin{equation}
\begin{split}
\label{eqrestrictedquadraticterms}
\abs{\innerprod{\gamma_n}{\Vhat\ast\gamma_n}-\innerprod{\tilde{\gamma}}{\Vhat\ast\tilde{\gamma}}}
&=\abs{\innerprod{\gamma_n-\tilde{\gamma}}{\Vhat\ast\gamma_n}-\innerprod{\tilde{\gamma}}{\Vhat\ast(\tilde{\gamma}-\gamma_n)}}\\
&\leq \norm{\gamma_n-\tilde{\gamma}}_{L^s(\RRdrei)} \left( \norm{\Vhat\ast\gamma_n}_{L^{s/(s-1)}(\RRdrei)} + \norm{\Vhat\ast\tilde{\gamma}}_{L^{s/(s-1)}(\RRdrei)}\right)\\
&\leq 2 \norm{\gamma_n-\tilde{\gamma}}_{L^{s}(\RRdrei)} \norm{\Vhat}_{L^{s/(2s-2)}} \sup_m \norm{\gamma_m}_{L^{s}(\RRdrei)} \to 0.
\end{split}
\end{equation}
The same calculation also works for the term that is quadratic in $\alpha$; thus, we have a minimizer.
\end{proof}
Now we analyze the structure of the restricted minimizers. For that define 
\begin{equation}
\rhokappa \coloneqq \rhozerokappa + \rhogammakappa \coloneqq \rhozerokappa +\intp \gammakappap.
\end{equation}

\begin{theorem}[Properties of the minimizers of the restricted problem]\label{theorempropertiesrestricted}
For $\kappa \geq \kappa_0$, the minimizers of the restricted problem satisfy
\begin{equation}
\intp \Vhatp (\gammakappap +\alphakappap) < 0,
\end{equation}
\begin{equation}
\rhozerokappa > 0
\end{equation}
as well as for almost every $p$
\begin{equation}
\label{eqminimizerpurekappa}
\alphakappap^2=\gammakappap^2+\gammakappap
\end{equation}
and
\begin{equation}
\label{eqinequalityalpha}
\alphakappap \frac{\p \FF}{\p \alpha}\triplekappa (p)\leq 0.
\end{equation}
Also, there is a symmetric minimizer satisfying
\begin{equation}
\gammakappap =\gammakappa (-p), \quad \alphakappap =\alphakappa (-p).
\end{equation}
\end{theorem}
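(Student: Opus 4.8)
I would establish the five assertions in the order of their logical dependence, since the first two feed into the proof of the third. For Properties~1 and~2 the only input needed is the energy upper bound of Lemma~\ref{lemmaupperboundenergy}. Expanding
\begin{equation*}
\FF\triplekappa=\intp p^2\gammakappap-\mu\rhokappa+\frac{\Vhat(0)}{2}\rhokappa^2+\rhozerokappa\intp\Vhatp(\gammakappap+\alphakappap)+\tfrac12\innerprod{\gammakappa}{\Vhat\ast\gammakappa}+\tfrac12\innerprod{\alphakappa}{\Vhat\ast\alphakappa},
\end{equation*}
one notes that $\intp p^2\gammakappa\ge0$ and that the two convolution quadratic forms are $\ge0$ because $V\ge0$. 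Hence if $\rhozerokappa=0$ the fourth term drops out and $\FF\triplekappa\ge\min_{t\ge0}\bigl(-\mu t+\tfrac{\Vhat(0)}{2}t^2\bigr)=-\mu^2/(2\Vhat(0))$ (with $\Vhat(0)=\int V>0$ since $V\ge0$, $V\not\equiv0$), contradicting Lemma~\ref{lemmaupperboundenergy}; so $\rhozerokappa>0$. Given this, the same chain forces $\intp\Vhatp(\gammakappap+\alphakappap)<0$, since otherwise $\rhozerokappa\intp\Vhatp(\gammakappap+\alphakappap)\ge0$ and we again contradict the upper bound. Finally, as $\FF$ is a quadratic polynomial in $\rhozero$ with $\gamma,\alpha$ frozen and $\rhozerokappa>0$ is an interior minimizer in that variable, the stationarity identity
\begin{equation*}
0=\frac{\p\FF}{\p\rhozero}\triplekappa=-\mu+\Vhat(0)\rhokappa+\intp\Vhatp(\gammakappap+\alphakappap)
\end{equation*}
holds; together with the previous inequality it yields $c_0:=\Vhat(0)\rhokappa-\mu=-\intp\Vhatp(\gammakappap+\alphakappap)>0$.

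For the pure-state identity \eqref{eqminimizerpurekappa} I would argue by contradiction: suppose the slack set $A:=\{p:\alphakappap^2<\gammakappap^2+\gammakappap\}$ has positive measure. Pick a bounded $B\subseteq A$ with $0<|B|<\infty$ on which $\gammakappap^2+\gammakappap-\alphakappap^2\ge\delta'>0$, and put $\psi:=t\,\mathbbm{1}_B$ with $t>0$. Using $\gammakappa\le\kappa$ one checks that $t\le\delta'/(2\kappa+1)$ guarantees $0\le\gammakappa-\psi$ and $(\gammakappa-\psi)^2+(\gammakappa-\psi)\ge\alphakappa^2$, so $(\gammakappa-\psi,\alphakappa,\rhozerokappa)\in\mathcal{D}_\kappa$. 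Writing $\Delta:=\lonenorm{\psi}=t|B|$, the $\rhozero$-stationarity makes the variation of $-\mu\rho+\tfrac{\Vhat(0)}{2}\rho^2$ equal $-c_0\Delta+\tfrac{\Vhat(0)}{2}\Delta^2$, and a direct expansion gives
\begin{equation*}
\functionalflex{\gammakappa-\psi}{\alphakappa}{\rhozerokappa}-\FF\triplekappa=-\intp p^2\psi-c_0\Delta-\rhozerokappa\intp\Vhatp\psi-\innerprod{\psi}{\Vhat\ast\gammakappa}+O(\Delta^2).
\end{equation*}
All four displayed terms are $\le0$ (the third and fourth because $\psi,\gammakappa\ge0$, $\rhozerokappa>0$ and $\Vhat\ge0$), with $-c_0\Delta$ strictly negative and linear in $\Delta$, while $O(\Delta^2)\le C(V)\Delta^2$; hence for $t$ small the right-hand side is $<0$, contradicting minimality. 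Thus $|A|=0$, which is \eqref{eqminimizerpurekappa}.

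The inequality \eqref{eqinequalityalpha} comes from a first-order variation in $\alpha$: $\FF$ depends on $\alpha$ only through $\rhozero\intp\Vhatp\ap+\tfrac12\innerprod{\alpha}{\Vhat\ast\alpha}$, with $\frac{\p\FF}{\p\alpha}\triplekappa(p)=\rhozerokappa\Vhatp+(\Vhat\ast\alphakappa)(p)$, and for every measurable $B$ and $s\in[0,1]$ the state $(\gammakappa,(1-s\,\mathbbm{1}_B)\alphakappa,\rhozerokappa)$ lies in $\mathcal{D}_\kappa$ (the bound $\alpha^2\le\gamma^2+\gamma$ only relaxes, and $\gamma,\rhozero$ are unchanged); the energy along this family is a convex quadratic in $s$ equal to $\FF\triplekappa-s\int_B\alphakappap\,\frac{\p\FF}{\p\alpha}\triplekappa(p)\,dp+O(s^2)$, so minimality at $s=0$ forces $\int_B\alphakappap\,\frac{\p\FF}{\p\alpha}\triplekappa(p)\,dp\le0$ for all $B$, i.e.\ \eqref{eqinequalityalpha} a.e. For the symmetric minimizer, $\Vhatp=\Vhat(-p)$ makes $\FF$ invariant under $p\mapsto-p$, so $(\gammakappa(-\cdot),\alphakappa(-\cdot),\rhozerokappa)$ is again a minimizer; by Lemma~\ref{lemmaconvexity} (intersected with the convex cutoff $\{\gamma\le\kappa\}$) the set $\{\triple\in\mathcal{D}_\kappa\mid\rhozero=\rhozerokappa\}$ is convex and $\FF$ is convex in $(\gamma,\alpha)$, so the arithmetic mean of these two minimizers lies in $\mathcal{D}_\kappa$, is again a minimizer, and is manifestly symmetric.

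I expect the pure-state identity to be the main obstacle. The tempting move of pushing $|\alpha|$ up to the boundary $\sqrt{\gamma^2+\gamma}$ does not obviously lower the energy, since the quadratic form $\innerprod{\alpha}{\Vhat\ast\alpha}$ may grow; the correct idea is to instead \emph{decrease} $\gamma$ on the slack set, and the reason it works is precisely Properties~1--2, which through the $\rhozero$-stationarity identity convert the induced density decrease into the strictly negative term $-c_0\Delta$, dominating the (here nonpositive) remaining first-order contributions and the quadratic remainder.
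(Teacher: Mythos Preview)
Your proposal is correct and follows essentially the same route as the paper. The paper obtains Properties~1 and~2 in one stroke (the upper bound forces $\rhozerokappa\int\Vhatp(\gammakappap+\alphakappap)<0$, so both factors are nonzero with the stated signs), and phrases the pure-state argument via $\frac{\p\FF}{\p\gamma}\triplekappa(p)=p^2+c_0+\rhozerokappa\Vhatp+\Vhat\ast\gammakappa(p)>0$ rather than your explicit perturbation expansion; but your four displayed first-order terms are exactly $-\int\psi\,\frac{\p\FF}{\p\gamma}\triplekappa$, so the content is identical.
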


\begin{proof}
We can apply an argument used in \cite{bogoliubovenergy} for showing that the minimizers at $T=0$ are pure states to the restricted minimizers: By Lemma \ref{lemmaupperboundenergy}, we have for $\kappa \geq \kappa_0$
\begin{equation}
\begin{split}
\frac{-\mu^2}{2\Vhat(0)} &> \functionalflex{\gammakappa}{\alphakappa}{\rhozerokappa}\\
			&= \intp p^2 \gammakappap + \rhozerokappa \intp \Vhatp (\gammakappap +\alphakappap) \\
			&\hphantom{=}+ \frac{1}{2}\intpq \Vhatpq (\gammakappap\gammakappaq+\alphakappap\alphakappaq)\\
			&\hphantom{=}-\mu \rhokappa +\frac{\Vhat(0)}{2}\rhokappa^2\\
			&\geq \rhozerokappa \intp \Vhatp (\gammakappap +\alphakappap) -\frac{\mu^2}{2\Vhat(0)}
\end{split}
\end{equation}
as both the quadratic terms and the kinetic energy are positive and the density terms are bounded below by the given constant. Thus,
\begin{equation}
\intp \Vhatp (\gammakappap +\alphakappap) < 0
\end{equation}
and
\begin{equation}
\rhozerokappa > 0.
\end{equation}
The fact that the condensate density is nonzero means that the energy can not be lowered by varying it; hence, the minimizer has to satisfy the Euler-Lagrange equation
\begin{equation}
\label{eqeulerlagrangerhozero}
0=\frac{\p \FF}{\p \rhozero}\triplekappa = \Vhat (0)\rhokappa -\mu  +\intp \Vhatp (\gammakappap + \alphakappap).
\end{equation}
Furthermore, the functional derivative in $\gamma$ is given by
\begin{equation}
\begin{split}
\frac{\p \FF}{\p \gamma}\triplekappa (p)
&= p^2 + \Vhat (0)\rhokappa -\mu +\rhozerokappa\Vhatp +\Vhat\ast \gammakappap\\
&= p^2 - \intq \Vhatq (\gammakappaq +\alphakappaq) +\rhozerokappa\Vhatp +\Vhat\ast \gammakappap\\
&> 0
\end{split}
\end{equation}
by the previous equations. This implies that up to a set of measure $0$ we need to have
\begin{equation}
\alphakappap^2=\gammakappap^2+\gammakappap
\end{equation}
as otherwise the energy could be lowered by decreasing $\gamma$ on the set where this does not hold.

The next property can be seen similarly. Suppose equation \eqref{eqinequalityalpha} was wrong: reducing $\abs{\alphakappa}$ on the set where the inequality is not satisfied would decrease the energy, and thus, this set can only have measure zero.

We can assume that the minimizers are symmetric because the symmetry of $\Vhat$ implies that the state defined by
\begin{equation}
\bar{\gamma}_\kappa (p)\coloneqq \gammakappa (-p), \quad \bar{\alpha}_\kappa (p)\coloneqq \alphakappa (-p)
\end{equation}
with the density $\rhozerokappa$ has the same energy as $\triplekappa$. Therefore, the convex combination with weights $1/2$ and $1/2$ can only have lower energy as $\triplekappa$ and it has the desired symmetry.
\end{proof}

\section{Uniform Bound}
In this section, we come to the core of the proof, which is showing the boundedness of the restricted minimizers and analyzing their decay for large $p$.
\begin{theorem}[Uniform bound for the restricted minimizers]
\label{theoremBoundrestricted}
There are constants $P_0 > 0$ and $C>0$ such that up to a subsequence (again called $\gammakappa$), the minimizers of the restricted problem satisfy almost everywhere
\begin{equation}
\gammakappap \leq 
\begin{cases}
C & \textrm{ for } \abs{p}\leq P_0\\
C\abs{p}^{-4} & \textrm{ for } \abs{p}\geq P_0.
\end{cases}
\end{equation}
\end{theorem}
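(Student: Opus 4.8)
\emph{Strategy.} The plan is to use the purity relation \eqref{eqminimizerpurekappa} to recast the restricted problem as a minimization in the single function $\alpha$ (with $\rho_0$ a free scalar), to extract an Euler--Lagrange/KKT relation for it, and to read off the claimed decay of $\gamma_\kappa$ from that relation together with a handful of $\kappa$-independent a priori estimates. A pure state is determined by $\alpha$ through $\gamma=g(\alpha):=\tfrac12\bigl(\sqrt{1+4\alpha^{2}}-1\bigr)$ (an even convex function with $g(0)=0$), and by Theorem~\ref{theorempropertiesrestricted} the restricted minimizer is pure, so $(\alpha_\kappa,\rho_{0,\kappa})$ minimizes $\mathcal{G}(\alpha,\rho_0):=\functionalflex{g(\alpha)}{\alpha}{\rho_0}$ over $\{\abs{\alpha}\le\sqrt{\kappa^{2}+\kappa}\}\times[0,\infty)$.

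\emph{Euler--Lagrange relation.} Localizing a variation of $\alpha$ at a Lebesgue point gives, for a.e.\ $p$, $\alpha_\kappa(p)\,\partial_\alpha\mathcal{G}(\alpha_\kappa,\rho_{0,\kappa})(p)\le 0$, with equality wherever $\gamma_\kappa(p)<\kappa$, where by the chain rule $\partial_\alpha\mathcal{G}(\alpha_\kappa,\rho_{0,\kappa})(p)=g'(\alpha_\kappa(p))A_\kappa(p)+B_\kappa(p)$ with $g'(\alpha)=\tfrac{2\alpha}{2g(\alpha)+1}$, $A_\kappa(p)=\frac{\p\FF}{\p\gamma}\triplekappa(p)=p^{2}+\rho_{0,\kappa}\Vhatp+(\Vhat\ast\gamma_\kappa)(p)-\intq\Vhatq\bigl(\gamma_\kappa(q)+\alpha_\kappa(q)\bigr)>0$ (positivity from the proof of Theorem~\ref{theorempropertiesrestricted}) and $B_\kappa(p)=\frac{\p\FF}{\p\alpha}\triplekappa(p)=\rho_{0,\kappa}\Vhatp+(\Vhat\ast\alpha_\kappa)(p)$. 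Multiplying by $\alpha_\kappa$, using $\alpha_\kappa g'(\alpha_\kappa)=\tfrac{2\alpha_\kappa^{2}}{2\gamma_\kappa+1}\ge 0$ together with $A_\kappa>0$, squaring, and inserting $\tfrac{4\alpha_\kappa^{2}}{(2\gamma_\kappa+1)^{2}}=1-\tfrac1{(2\gamma_\kappa+1)^{2}}$ yields, at a.e.\ $p$ with $A_\kappa(p)>\abs{B_\kappa(p)}$,
\[
4\,\gamma_\kappa(p)\bigl(\gamma_\kappa(p)+1\bigr)\ \le\ \frac{B_\kappa(p)^{2}}{A_\kappa(p)^{2}-B_\kappa(p)^{2}}\, .
\]

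\emph{A priori bounds and large momenta.} Lemma~\ref{lemmalowerbound} together with the $\kappa$-independent trial-state bound of Lemma~\ref{lemmaupperboundenergy} (which beats $-\mu^{2}/2\Vhat(0)$ by a fixed $\eta>0$) makes $\rho_{0,\kappa},\rho_{\gamma,\kappa},\rho_\kappa$ uniformly bounded; rewriting the energy as $\FF+\tfrac{\mu^{2}}{2\Vhat(0)}=\intp p^{2}\gamma_\kappa+(\text{nonneg.\ quadratic})+\tfrac{\Vhat(0)}2\bigl(\rho_\kappa-\tfrac{\mu}{\Vhat(0)}\bigr)^{2}-\rho_{0,\kappa}\bigl(\Vhat(0)\rho_\kappa-\mu\bigr)$ forces $\rho_{0,\kappa}\bigl(\Vhat(0)\rho_\kappa-\mu\bigr)\ge\eta$, hence $\Vhat(0)\rho_\kappa-\mu\ge c_0>0$ uniformly, so $A_\kappa(p)\ge p^{2}+c_0$ for all $p$. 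Splitting $\alpha_\kappa=\alpha_\kappa\mathbbm{1}_{\{\gamma_\kappa\le 1\}}+\alpha_\kappa\mathbbm{1}_{\{\gamma_\kappa>1\}}$ bounds the first piece in $L^{2}$ and the second in $L^{1}$ uniformly, and since $\Vhat\in L^{1}\cap L^{\infty}\subset L^{2}$ we get $\norm{B_\kappa}_{L^\infty(\RRdrei)}\le M$ uniformly in $\kappa$. Now put $P_0:=(2M^{2})^{1/4}$: for $\abs{p}\ge P_0$ one has $A_\kappa(p)^{2}\ge\abs{p}^{4}\ge 2M^{2}\ge 2B_\kappa(p)^{2}$, so $A_\kappa^{2}-B_\kappa^{2}\ge A_\kappa^{2}/2$ and the displayed relation gives $\gamma_\kappa(p)\le\tfrac{M^{2}}2\abs{p}^{-4}$, with a constant independent of $\kappa$ and no subsequence required.

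\emph{Small momenta --- the main obstacle.} For $\abs{p}\le P_0$ one still needs a uniform bound $\gamma_\kappa\le C$, and here the Euler--Lagrange relation degenerates because, in contrast to the large-momentum regime, $A_\kappa-\abs{B_\kappa}$ need not be bounded below near $p=0$; this is where the subsequence enters and where the real difficulty lies. I would argue by modification (as announced in the introduction): cap $\gamma_\kappa$ at level $L$, replace $\alpha_\kappa$ by $\sgn(\alpha_\kappa)\sqrt{\min(\gamma_\kappa,L)^{2}+\min(\gamma_\kappa,L)}$ so the state stays pure, and transfer the excess $\int\bigl(\gamma_\kappa-\min(\gamma_\kappa,L)\bigr)$ into $\rho_0$ so that $\rho$ --- hence the $-\mu\rho$ and $\tfrac{\Vhat(0)}2\rho^{2}$ terms --- is unchanged; by the large-momentum bound $\{\gamma_\kappa>L\}\subset\{\abs{p}<P_0\}$ once $L$ is large, and the trial state lies in $\mathcal{D}_\kappa$, so its energy is $\ge\functionalflex{\gamma_\kappa}{\alpha_\kappa}{\rho_{0,\kappa}}$. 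The kinetic and $\gamma$-quadratic terms only decrease, the linear term changes by at most $-\bigl(\Vhat(0)\rho_\kappa-\mu\bigr)\int_{\{\gamma_\kappa>L\}}(\gamma_\kappa-L)+O(1/L)\le -c_0\int_{\{\gamma_\kappa>L\}}(\gamma_\kappa-L)+O(1/L)$, and the $\alpha$-quadratic and residual terms are $O\!\bigl(\int_{\{\gamma_\kappa>L\}}(\gamma_\kappa-L)\bigr)$; comparing energies and letting $L\to\infty$ along a suitable subsequence in $\kappa$ should then force $\{\gamma_\kappa>L\}$ to be null. The delicate point, and the main obstacle, is making this comparison quantitative --- in particular checking that the gain really dominates the $\alpha$-quadratic cost (which requires exploiting the signs of $\alpha_\kappa$ and $B_\kappa$ on $\{\gamma_\kappa>L\}$, not merely crude $L^{1}$/$L^{\infty}$ bounds) --- for which the uniform positivity $\Vhat(0)\rho_\kappa-\mu\ge c_0>0$ is essential; once $\gamma_\kappa$ is bounded on $\{\abs{p}\le P_0\}$, combining with the previous paragraph yields the stated estimate with $C:=\max\bigl(\sup_{\abs{p}\le P_0}\gamma_\kappa,\;M^{2}/2\bigr)$.
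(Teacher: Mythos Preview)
Your Euler--Lagrange analysis and the large-momentum bound coincide with the paper's Steps~1--3. The inequality $4\alpha_\kappa^{2}\le B_\kappa^{2}/(A_\kappa^{2}-B_\kappa^{2})$ is exactly the paper's key estimate, and your derivation of $A_\kappa(p)\ge p^{2}+c_0$ via the uniform energy gap together with the $\rho_0$-Euler--Lagrange equation is a clean way to phrase it; the $|p|^{-4}$ decay then follows without any subsequence, as you note.

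The small-momentum part is genuinely incomplete, and the capping-and-transfer modification you propose faces an obstacle you name but do not resolve. Both the gain from the linear term and the possible loss from the $\alpha$-quadratic term scale like $\delta=\int_{\{\gamma_\kappa>L\}}(\gamma_\kappa-L)$, and nothing forces the constants to compare favourably. In fact the sign relation $\alpha_\kappa B_\kappa\le 0$ works against you: the leading change in the $\alpha$-quadratic term is $\int_{\{\gamma_\kappa>L\}}(\gamma_\kappa-L)\,|B_\kappa|$, and on $\{\gamma_\kappa>L\}$ the Euler--Lagrange inequality itself gives $|B_\kappa|\ge\sqrt{4L/(4L+1)}\,A_\kappa\ge c_0\sqrt{4L/(4L+1)}$, so this cost is at least of the same order as your claimed gain $c_0\delta$. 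Your $O(1/L)$ estimate on the change of $\int\hat V(\gamma+\alpha)$ also tacitly assumes $\alpha_\kappa<0$ on $\{\gamma_\kappa>L\}$, which is not known. Even if the inequalities did line up, the conclusion would be $\delta\lesssim(\text{lower order})$ rather than $\delta=0$.

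The paper takes a different route for $|p|\le P_0$. Using the tail decay already established to control the convolutions, it applies Arzel\`a--Ascoli to pass to a subsequence along which $A_\kappa\to a$ and $B_\kappa\to b$ uniformly on the ball, with continuous limits, and then shows $a^{2}-b^{2}\ge\epsilon>0$ there. Near $p=0$ this follows from $a(0)^{2}-b(0)^{2}=-4\tilde\rho_0\hat V(0)\lim_\kappa\int\hat V\alpha_\kappa>0$ and continuity. On the remaining annulus $P_1\le|p|\le P_0$ the decisive trial state is a \emph{translation}, $\bar\gamma_\kappa(q)=\gamma_\kappa(p-q)$, $\bar\alpha_\kappa(q)=\pm\alpha_\kappa(p-q)$: comparing energies in the limit bounds $a(p)\pm b(p)$ from below by $(1-\tilde\rho_\gamma/\tilde\rho_0)\,p^{2}$, and the required inequality $\tilde\rho_0>\tilde\rho_\gamma$ is obtained from a second comparison in which $(\gamma_\kappa,\alpha_\kappa)$ is scaled by $\eta\in(0,1)$ with the mass defect transferred to $\rho_0$. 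That scaling trial state has some of the flavour of your capping idea, but the translation trial state is the ingredient your argument is missing.
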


\begin{proof}
The idea is to first consider a combined variation in $\gamma$ and $\alpha$, which leads to a bound on $\gammakappa$ in terms of the functional derivatives of $\FF$ with respect to $\gamma$ and $\alpha$. This bound immediately implies the decay for large $\abs{p}$. For small $\abs{p}$, we modify $\triplekappa$ and compare the energy of the modified state to the one of $\triplekappa$ to obtain the desired bounds.

Step 1: We only consider $\kappa\geq \kappa_0$ in the following, where $\kappa_0$ is the constant from Theorem \ref{theorempropertiesrestricted} so that we can use the properties stated therein. By equation \eqref{eqminimizerpurekappa}, we can write $\gammakappa$ in terms of $\alphakappa$. Define
\begin{equation}
\gamma(\alpha)\coloneqq -\frac{1}{2} +\sqrt{\frac{1}{4}+\alpha^2}.
\end{equation}
By definition, $\alpha^2=\gamma (\alpha)^2+\gamma (\alpha)$. Now consider the functional
\begin{equation}
\tilde{\FF}(\alpha,\rhozero)\coloneqq \functionalflex{\gamma(\alpha)}{\alpha}{\rhozero}
\end{equation}
on the domain
\begin{equation}
\mathcal{D}_\kappa^{\alpha} \coloneqq \left\{ (\alpha,\rhozero) \vert \gamma(\alpha)\in L^1(\RRdrei,(1+p^2)dp), \gamma(\alpha)\leq\kappa, \rhozero\geq 0\right\}.
\end{equation}
Clearly, $(\gamma (\alpha), \alpha, \rhozero) \in \mathcal{D}_\kappa$ for all $(\alpha,\rhozero)\in \mathcal{D}_\kappa^{\alpha}$. This implies that $(\alphakappa,\rhozerokappa)$ is a minimizer of the minimization problem on $\mathcal{D}_\kappa^\alpha$. Now consider the variation in $\alpha$ for this functional. First, define
\begin{equation}\label{eqAkappa}
\begin{split}
A_\kappa(p) & \coloneqq \frac{\p \FF}{\p \gamma}\triplekappa (p)\\
&= p^2 + \Vhat (0)\rhokappa -\mu +\rhozerokappa\Vhatp +\Vhat\ast \gammakappap\\
&= p^2 - \intq \Vhatq (\gammakappaq +\alphakappaq) +\rhozerokappa\Vhatp +\Vhat\ast \gammakappap >0
\end{split}
\end{equation}
and
\begin{equation}\label{eqBkappa}
B_\kappa(p) \coloneqq \frac{\p \FF}{\p \alpha}\triplekappa (p)= \rhozerokappa\Vhatp +\Vhat\ast \alphakappap 
\end{equation}
where we used the Euler-Lagrange equation in $\rhozero$ \eqref{eqeulerlagrangerhozero}.
Observe that
\begin{equation}
\frac{\p \gamma(\alpha)}{\p\alpha}=\frac{\alpha}{\sqrt{\frac{1}{4}+\alpha^2}}
\end{equation}
which is bounded by $1$ in absolute value. Thus, we can apply the chain rule to obtain
\begin{equation}
\frac{\p \tilde{\FF}}{\p \alpha} (\alphakappa,\rhozerokappa) = A_\kappa \left.\frac{\p \gamma(\alpha)}{\p\alpha}\right\vert_{\alphakappa}   + B_\kappa
=\frac{\alphakappa}{\sqrt{\frac{1}{4}+\alphakappa^2}} A_\kappa + B_\kappa .
\end{equation}
We need to have
\begin{equation}
\label{eqcombinedeulerlagrange}
\frac{\p \tilde{\FF}}{\p \alpha} (\alphakappa,\rhozerokappa) (p)
\begin{cases}
\leq 0 & \textrm{ for } \alphakappap> 0\\
= 0 & \textrm{ for } \alphakappap= 0\\
\geq 0 & \textrm{ for } \alphakappap< 0
\end{cases}
\end{equation}
for almost all $p$ as otherwise the energy could be lowered by varying $\alpha$. Clearly, $B_\kappa(p)=0$ for $\alphakappap=0=\gammakappap$. But the reverse is also true as $B_\kappa (p)=0$ yields
\begin{equation}
0\geq \frac{\abs{\alphakappap}}{\sqrt{\frac{1}{4}+\alphakappap^2}} A_\kappa (p) \Rightarrow \alphakappap = 0
\end{equation}
as $A_\kappa(p)>0$.
Hence, up to null sets
\begin{equation}
\label{eqzerogammazeroB}
\{p\mid\gammakappap=0\}=\{p\mid B_\kappa(p)=0\}.
\end{equation}
Next, for almost every $p$ we either have $\gammakappap=0=\alphakappap$, which clearly satisfies the desired inequality, or we have $\alphakappap\neq 0$ and
\begin{equation}
0 \geq \sgn(\alphakappap) \frac{\p \tilde{\FF}}{\p \alpha} (\alphakappa,\rhozerokappa) (p)
= - \abs{B_\kappa(p)}+\frac{\abs{\alphakappap}}{\sqrt{\frac{1}{4}+\alphakappap^2}} A_\kappa (p)
\end{equation}
where we used that $\sgn (\alphakappap)B_\kappa(p)=-\abs{B_\kappa(p)}$ for such $p$ by \eqref{eqinequalityalpha}.
Using the strict positivity of $A_\kappa$ and that $B_\kappa (p)\neq 0$ due to \eqref{eqzerogammazeroB}, we see
\begin{equation}
B_\kappa(p)^2\ \geq \frac{1}{1+\frac{1}{4\alphakappap^2}}A_\kappa(p)^2
\end{equation}
implying
\begin{equation}
\label{eqinequalitygammaAB}
\frac{1}{4\alphakappap^2}\geq\frac{A_\kappa(p)^2-B_\kappa (p)^2}{B_\kappa (p)^2}.
\end{equation}
Therefore, finding upper bounds on $\gammakappap$ and $\alphakappap$, respectively, corresponds to bounding $A_\kappa(p)^2-B_\kappa (p)^2$ from below by a positive function.

Step 2: As in the proof of Proposition 5.2 in \cite{bogoliubovenergy}, we can show that up to a subsequence we have the convergence $\rhozerokappa\to \tilde{\rho}_0$ and $\rho_{\gamma,\kappa} \to \tilde{\rho}_\gamma$: By the trial state argument in Lemma \ref{lemmaupperboundenergy} and the lower bound in Lemma \ref{lemmalowerbound}, we see that for $\kappa\geq \kappa_0$ we have
\begin{equation}\label{eqdensitiesbounded}
\begin{split}
\frac{-\mu^2}{2\Vhat(0)} > \functionalflex{\gammakappa}{\alphakappa}{\rhozerokappa}
			\geq -C+ \epsilon (\rhozerokappa^2 + \rhogammakappa ^2).
\end{split}
\end{equation}
This implies boundedness of both densities, and therefore, there is a subsequence for which the densities converge. By replacing the original sequence with this subsequence (again labelled by $\kappa$), we can assume the convergence as desired. For this subsequence, we know that all terms in the Euler-Lagrange equation \eqref{eqeulerlagrangerhozero} except the integral term have a limit; therefore, it converges as well. Using again the trial state argument in Lemma \ref{lemmaupperboundenergy}, we see that
\begin{equation}
\begin{split}
\frac{-\mu^2}{2\Vhat(0)} &> \lim_{\kappa\to\infty} \functionalflex{\gammakappa}{\alphakappa}{\rhozerokappa}\\
			&= \lim_{\kappa\to\infty} \left( \intp p^2 \gammakappap + \rhozerokappa \intp \Vhatp (\gammakappap +\alphakappap)\right.\\
			&\hphantom{= \lim_{\kappa\to\infty} \left(\right.}+ \frac{1}{2}\intpq \Vhatpq (\gammakappap\gammakappaq+\alphakappap\alphakappaq)\\
			&\hphantom{= \lim_{\kappa\to\infty} \left(\right.} \left.-\mu \rhokappa +\frac{\Vhat(0)}{2}\rhokappa^2\right)\\
			&\geq \tilde{\rho}_0  \lim_{\kappa\to\infty} \intp \Vhatp (\gammakappap +\alphakappap) -\mu \tilde{\rho} +\frac{\Vhat(0)}{2}\tilde{\rho}^2\\
			&\geq \tilde{\rho}_0  \lim_{\kappa\to\infty} \intp \Vhatp (\gammakappap +\alphakappap) -\frac{\mu^2}{2\Vhat(0)},
\end{split}
\end{equation}
where $\tilde{\rho}\coloneqq \tilde{\rho}_0+\tilde{\rho}_\gamma$. Thus,
\begin{equation}
\label{eqlimitlinearterm1}
\tilde{\rho}_0>0
\end{equation}
and
\begin{equation}\label{eqlimitlinearterm2}
\lim_{\kappa\to\infty} \intp \Vhatp (\gammakappap +\alphakappap) =-C < 0.
\end{equation}
This also implies 
$
\tilde{\rho}_\gamma>0
$
because the integral term would go to $0$ otherwise as its absolute value is bounded by $C(\rhogammakappa+ \sqrt{\rhogammakappa})$. 
 
Step 3: Now we prove the easier bound for large $p$. Using the previous considerations and the boundedness of $\Vhat$ in the $L^2$ and $L^\infty$ norm, we can conclude that all the terms in $A_\kappa (p)$ and $B_\kappa (p)$ except for the $p^2$ term are uniformly bounded in both $\kappa$ and $p$. Hence, there is a $P_0$ such that $A_\kappa(p)^2-B_\kappa(p)^2 \geq C p^4$ for $\abs{p}\geq P_0$, where $C$ is a strictly positive constant independent of $\kappa$. Plugging this into \eqref{eqinequalitygammaAB} and using again the uniform boundedness of $B_\kappa (p)$, we see that for such $p$
\begin{equation}
\label{eqboundlargep}
\frac{1}{\alphakappap^2}\geq\frac{C p^4}{B_\kappa(p)^2} \Rightarrow \gammakappap \leq \gammakappap +\gammakappap^2 = \alphakappap^2 \leq C p^{-4}.
\end{equation}

Step 4: Next, we consider small $p$. First, we want to show that up to a subsequence $A_\kappa$ and $B_\kappa$ converge uniformly to continuous functions on the ball $\{p\mid\abs{p}\leq P_0\}$. The advantage of considering convergence only on this compact set is that continuity of $\Vhat$ suffices to prove it.
Clearly, by the specific form given by \eqref{eqAkappa} and \eqref{eqBkappa}, it is enough to show the uniform convergence for the convolution terms. 
In order to do so, we check that they are uniformly equicontinuous on the compact set $\{p\mid\abs{p}\leq P_0\}$.
Consider first $\Vhat\ast \gammakappa$. Take $\epsilon >0$. We want to show that there is a $\delta>0$ independent of $\kappa$ such that $\abs{\Vhat\ast \gammakappa (r) - \Vhat\ast \gammakappa (s)}<\epsilon$ for $\abs{r-s}< \delta$ and $\abs{r},\abs{s} \leq P_0$.

\begin{equation}
\begin{split}
&\abs{\Vhat\ast \gammakappa (r) - \Vhat\ast \gammakappa (s)} \leq \intp \abs{\Vhat (r-p)-\Vhat (s-p)}\gammakappap\\
&= \int_{\abs{p}\leq R}dp\, \abs{\Vhat (r-p)-\Vhat (s-p)}\gammakappap + \int_{\abs{p}> R}dp\, \abs{\Vhat (r-p)-\Vhat (s-p)}\gammakappap\\
&\leq \int_{\abs{p}\leq R}dp\, \abs{\Vhat (r-p)-\Vhat (s-p)}\gammakappap + 2\Vhat(0)\int_{\abs{p}> R}dp\, \frac{C}{p^4}\\
&\leq \int_{\abs{p}\leq R}dp\, \abs{\Vhat (r-p)-\Vhat (s-p)}\gammakappap + \frac{\epsilon}{2}\\
&\leq \sup_{\substack{\abs{p}\leq R, \,\abs{r},\abs{s}\leq P_0\\ \abs{r-s}\leq \delta}} \abs{\Vhat (r-p)-\Vhat (s-p)} \int_{\abs{p}\leq R}dp\, \gammakappap +\frac{\epsilon}{2} < \epsilon
\end{split}
\end{equation}
Here we used the bound \eqref{eqboundlargep} on large $p$ and choose $R>P_0$ large enough such that the second term is less than $\epsilon/2$. Recall the uniform bound $\rhogammakappa\leq C$ from step 2.
The Heine-Cantor theorem states that continuous functions on compact sets are uniformly continuous. Therefore, $\Vhat$ is uniformly continuous on the compact set $\{p\mid\abs{p}\leq P_0+R\}$ which allows us to make the supremum small by choosing $\delta$ small enough. Thus, we can bound the first term by $\epsilon/2$.

We can deal with the convolution of $\alphakappa$ in a similar way. Note that $\alphakappa^2+1/4=(\gammakappa+1/2)^2$ implies $\abs{\alphakappa}\leq \gammakappa +1/2$.
\begin{equation}
\begin{split}
&\abs{\Vhat\ast \alphakappa (r) - \Vhat\ast \alphakappa (s)} \leq \intp \abs{\Vhat (r-p)-\Vhat (s-p)}\abs{\alphakappap}\\
&= \int_{\abs{p}\leq R}dp\, \abs{\Vhat (r-p)-\Vhat (s-p)}\abs{\alphakappap} + \int_{\abs{p}> R}dp\, \abs{\Vhat (r-p)-\Vhat (s-p)}\abs{\alphakappap}\\
&\leq \int_{\abs{p}\leq R}dp\, \abs{\Vhat (r-p)-\Vhat (s-p)}\left(\gammakappap +\frac{1}{2}\right) + 2\ltwonorm{\Vhat}\ltwonorm{\alphakappa \mathbbm{1}_{\{\abs{p}> R\}}(p)}\\
&\leq \int_{\abs{p}\leq R}dp\, \abs{\Vhat (r-p)-\Vhat (s-p)}\left(\gammakappap +\frac{1}{2}\right) + 2\ltwonorm{\Vhat}\int_{\abs{p}> R}dp\, \frac{C}{p^4}\\
&\leq \sup_{\substack{\abs{p}\leq R, \,\abs{r},\abs{s}\leq P_0\\ \abs{r-s}\leq \delta}} \abs{\Vhat (r-p)-\Vhat (s-p)} \int_{\abs{p}\leq R}dp\, \left(\gammakappap +\frac{1}{2}\right) + \frac{\epsilon}{2} < \epsilon,
\end{split}
\end{equation}
where we again used the bound \eqref{eqboundlargep} to make the second integral small by choosing $R>P_0$ large. Due to uniform continuity of $\Vhat$ on $\{\abs{p}\leq P_0+R\}$ and $\int_{\abs{p}\leq R}dp\, (\gammakappap+1/2)\leq C$, the first term is small for sufficiently small $\delta$.

We can now use the Arzelà-Ascoli theorem (see \cite[Theorem I.28]{ReedSimon}): $\Vhat \ast \gammakappa$ is uniformly bounded and uniformly equicontinuous on the ball of radius $P_0$, which is a compact set. Thus, there is a subsequence that converges uniformly for $\abs{p}\leq P_0$. Repeating the same argument for $\Vhat \ast \alphakappa$, we obtain uniform convergence for it as well. Note that since all convolution terms here are continuous on the ball, the limiting functions are continuous as well by Theorem I.25 in \cite{ReedSimon}. We again label the chosen subsequence by the index $\kappa$. 
Hence, we have
\begin{equation}
a(p)\coloneqq \lim_{\kappa\to\infty} A_\kappa(p)
\end{equation}
and
\begin{equation}
b(p)\coloneqq \lim_{\kappa\to\infty} B_\kappa(p)
\end{equation}
for $\abs{p}\leq P_0$. The convergence is uniform, and both $a$ and $b$ are continuous on the ball.

Step 5: Equation \eqref{eqinequalitygammaAB} yields the desired bound for small $\abs{p}$ if $A_\kappa\pm B_\kappa \geq \delta$ on the ball with radius $P_0$ for some $\delta >0$ independent of $\kappa$. The idea is to prove that $a^2-b^2\geq\epsilon>0$ on this ball and then use uniform convergence. This inequality can be seen in two steps. First, we consider small $\abs{p}$ and then the rest. Observe that we can write $A_\kappa^2-B_\kappa^2= (A_\kappa-B_\kappa)(A_\kappa + B_\kappa)$. By \eqref{eqAkappa} and \eqref{eqBkappa}, we have
\begin{equation}\label{eqAplusBkappa}
A_\kappa(p) +B_\kappa(p) = p^2 +\Vhat \ast( \gammakappa +\alphakappa)(p) - \intq \Vhatq (\gammakappaq +\alphakappaq) + 2\rhozerokappa\Vhatp  
\end{equation}
and
\begin{equation} \label{eqAminusBkappa}
A_\kappa(p) - B_\kappa(p) = p^2 +\Vhat \ast (\gammakappa -\alphakappa)(p) - \intq \Vhatq (\gammakappaq +\alphakappaq).
\end{equation}
Evaluating these expressions at $p=0$ yields
\begin{equation}
a(0)^2-b(0)^2=-4\tilde{\rho}_0 \Vhat (0) \lim_{\kappa\to \infty} \intp \Vhatp \alphakappap > 0
\end{equation}
due to \eqref{eqlimitlinearterm1} and \eqref{eqlimitlinearterm2}. The limit has to exist by the choice of the subsequence. By continuity, we can find $P_1>0$ such that $a(p)^2-b(p)^2\geq \epsilon >0$ for all $\abs{p}\leq P_1$.

Step 6: Now we turn to the case $P_1\leq \abs{p}\leq P_0$. The key is that we can modify the minimizing sequence and that this cannot decrease the limit of the energy.
Define for $P_1\leq\abs{p}\leq P_0$:
\begin{equation}
\bar{\gamma}_\kappa(q)\coloneqq \gammakappa (p-q),\quad \bar{\alpha}_\kappa(q)\coloneqq \pm\alphakappa (p-q).
\end{equation}
We want to compare the energies: clearly, the shift and reflection do not change the densities, and also, the quadratic terms remain unchanged since $\Vhat$ is symmetric. For the kinetic term we can use that $\gammakappa$ is symmetric to see
\begin{equation}
\intq q^2 \bar{\gamma}_\kappa(q) =\intq q^2 \gammakappa (p-q) = \intq (p-q)^2 \gammakappaq = \intq q^2\gammakappaq + p^2 \rhogammakappa
\end{equation}
where the mixed term is $0$ by symmetry.
Note that
\begin{equation}
\intq \Vhatq (\bar{\gamma}_\kappa(q)+\bar{\alpha}_\kappa(q)) = \Vhat \ast (\gammakappa \pm \alphakappa)(p).
\end{equation}
Since
\begin{equation}
\liminf_{\kappa\to\infty} \functionalflex{\bar{\gamma}_\kappa}{\bar{\alpha}_\kappa}{\rhozerokappa}
\geq \inf_{(\gamma,\alpha,\rhozero)\in \mathcal{D}}\functional = \lim_{\kappa \to\infty} \functionalflex{\gammakappa}{\alphakappa}{\rhozerokappa} 
\end{equation}
we see
\begin{equation}
\label{eqtrialstateshift}
\begin{split}
0&\leq \liminf_{\kappa\to\infty} \left( \functionalflex{\bar{\gamma}_\kappa}{\bar{\alpha}_\kappa}{\rhozerokappa} - \functionalflex{\gammakappa}{\alphakappa}{\rhozerokappa} \right)\\
&=\lim_{\kappa\to\infty} \left( p^2 \rhogammakappa +\rhozerokappa \left( \Vhat \ast (\gammakappa \pm \alphakappa)(p) -  \intq \Vhatq (\gammakappaq + \alphakappaq) \right) \right)\\
&= p^2 \tilde{\rho}_\gamma +\tilde{\rho}_0 \lim_{\kappa\to\infty} \left( \Vhat \ast (\gammakappa \pm \alphakappa)(p) -  \intq \Vhatq (\gammakappaq + \alphakappaq) \right).
\end{split}
\end{equation}
We used that all terms converge by our choice of the subsequence to replace the $\liminf$ with a $\lim$.
If the densities were not there, these terms would equal $a(p) \pm b(p)$ up to the positive term $2\tilde{\rho}_0\Vhat$ in $a+b$.
Hence, we need to show that $\tilde{\rho}_0>\tilde{\rho}_\gamma$ to use \eqref{eqtrialstateshift} for proving the desired inequality for $a(p)\pm b(p)$.
This inequality for the densities follows from another trial state argument. Modify $\triplekappa$ by transferring some particles from $\gammakappa$ into the condensate while keeping the total density fixed: For $\eta\in (0,1)$, consider the states
\begin{equation}
\bar{\gamma}_\kappa(p)\coloneqq \eta \gammakappa (p),
\quad \bar{\alpha}_\kappa(p)\coloneqq \eta\alphakappa (p),
\quad
 \bar{\rho}_{0,\kappa}\coloneqq \rhozerokappa + (1-\eta)\rhogammakappa.
\end{equation}
Clearly, they satisfy the condition $\bar{\alpha}_\kappa^2\leq \bar{\gamma}_\kappa^2 +\bar{\gamma}_\kappa$. Similar to before, the $\liminf$ of the difference of the energies has to be positive, which implies
\begin{equation}
\begin{split}
0&\leq \frac{1}{1-\eta} \liminf_{\kappa\to\infty} \left( \functionalflex{\bar{\gamma}_\kappa}{\bar{\alpha}_\kappa}{\bar{\rho}_{0,\kappa}} - \functionalflex{\gammakappa}{\alphakappa}{\rhozerokappa} \right)\\
&= \frac{1}{1-\eta}\liminf_{\kappa\to\infty}  \left((\eta-1) \intp p^2\gammakappap \right.\\
&\hphantom{=\frac{1}{1-\eta}\liminf_{\kappa\to\infty}  \left(\right.} + (\eta^2-1) \frac{1}{2}\intpq \Vhatpq (\gammakappap\gammakappaq+\alphakappap\alphakappaq)\\
&\hphantom{=\frac{1}{1-\eta}\liminf_{\kappa\to\infty}  \left(\right.} + \left.(\lbrack \rhozerokappa  +(1-\eta)\rhogammakappa)\rbrack \eta -\rhozerokappa ) \intp\Vhatp (\gammakappap+\alphakappap)\right)\\
&=\liminf_{\kappa\to\infty}  \left(- \intp p^2\gammakappap \right.\\
&\hphantom{=\liminf_{\kappa\to\infty}  \left(\right.} -(1+\eta) \frac{1}{2}\intpq \Vhatpq (\gammakappap\gammakappaq+\alphakappap\alphakappaq)\\
& \hphantom{=\liminf_{\kappa\to\infty}  \left(\right.} + \left.( -\rhozerokappa  +\eta\rhogammakappa ) \intp\Vhatp (\gammakappap+\alphakappap)\right)\\
&\leq - \limsup_{\kappa\to\infty} \left(\intp p^2\gammakappap + \frac{1}{2}\intpq \Vhatpq \gammakappap\gammakappaq \right)\\
&\hphantom{\leq } + ( \tilde{\rho}_{0,\kappa} - \eta\tilde{\rho}_{\gamma,\kappa} ) \left(-\lim_{\kappa\to\infty} \intp \Vhatp (\gammakappap +\alphakappap)\right).
\end{split}
\end{equation}
Sending $\eta$ to $1$, we see
\begin{equation}
\label{eqinequalityrhozerorhogamma}
\tilde{\rho}_{0,\kappa}\geq \tilde{\rho}_{\gamma,\kappa} + \frac{\limsup_{\kappa\to\infty} \left(\intp p^2\gammakappap + \frac{1}{2}\intpq \Vhatpq \gammakappap\gammakappaq\right)}{-\lim_{\kappa\to\infty} \intp \Vhatp (\gammakappap +\alphakappap)}.
\end{equation}
The fraction is strictly larger than $0$ because we can find $\epsilon>0$ such that $\Vhatp \geq \Vhat(0)/2$ if $\abs{p}\leq 2\epsilon$ and then
\begin{equation}\label{eqpositivitykineticconvolution}
\begin{split}
&\intp p^2\gamma (p) + \frac{1}{2}\intpq \Vhatpq \gamma (p) \gamma (q)\\
&\geq \epsilon^2 \int_{\abs{p}> \epsilon} dp\,\gamma(p) +\frac{\Vhat(0)}{4} \int_{\abs{p}\leq \epsilon} dp\,\int_{\abs{q}\leq \epsilon} dq\, \gamma(p)\gamma (q)\\
& \geq\inf_{0\leq x\leq \rho_\gamma} \epsilon^2 x + \frac{\Vhat(0)}{4} (\rho_\gamma -x)^2
\end{split}
\end{equation}
implying positivity of the $\limsup$ since $\rhogammakappa\to \tilde{\rho}_\gamma >0 $. Combining $\tilde{\rho}_0>\tilde{\rho}_\gamma$ with \eqref{eqtrialstateshift} we obtain
\begin{equation}
\label{eqineqbothtrialstatescombined}
p^2  + \lim_{\kappa\to\infty} \left( \Vhat \ast (\gammakappa \pm \alphakappa)(p) -  \intq \Vhatq (\gammakappaq + \alphakappaq) \right) \geq \frac{\tilde{\rho}_0-\tilde{\rho}_\gamma}{\tilde{\rho}_0} P_1^2 > 0
\end{equation}
for $P_1\leq\abs{p}\leq P_0$. Since $\tilde{\rho}_0\Vhat$ is positive, this means that there is $\epsilon>0$ (possibly smaller than the one in the previous step) such that 
\begin{equation}
a(p)^2-b(p)^2 \geq \epsilon
\end{equation}
for $\abs{p}\leq P_0$.

Step 7: Thanks to the uniform convergence, we can conclude that for all $\abs{p}\leq P_0$ and for $\kappa$ large enough
\begin{equation}
A_\kappa(p)^2- B_\kappa(p)^2 \geq \frac{\epsilon}{2}.
\end{equation}
In combination with \eqref{eqinequalitygammaAB} and the uniform boundedness of $B_\kappa$, this yields the uniform bound for $\abs{p}\leq P_0$:
\begin{equation}
\frac{1}{\alphakappap^2} \geq\frac{C}{B_\kappa(p)^2}\Rightarrow \gammakappap \leq \gammakappap +\gammakappap^2 = \alphakappap^2 \leq C.
\end{equation}
\end{proof}

\section{Conclusion}
We use the restricted minimizers for obtaining a minimizer of the full problem.
\begin{lemma}[Minimizing sequence]
\label{lemmaminimizingsequenceofrestricted}
The minimizers of the restricted problem form a minimizing sequence for the full problem:
\begin{equation}
\inf_{(\gamma,\alpha,\rhozero)\in \mathcal{D}}\functional = \lim_{\kappa \to\infty} \functionalflex{\gammakappa}{\alphakappa}{\rhozerokappa}
\end{equation}
\end{lemma}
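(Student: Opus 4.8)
The plan is to prove two inequalities. Since $\mathcal{D}_\kappa\subseteq\mathcal{D}_{\kappa'}\subseteq\mathcal{D}$ whenever $\kappa\leq\kappa'$, the numbers $\FF(\gammakappa,\alphakappa,\rhozerokappa)=\inf_{\mathcal{D}_\kappa}\FF$ form a non-increasing sequence which, by Lemma \ref{lemmalowerbound}, is bounded below; hence the limit on the right-hand side exists and $\lim_{\kappa\to\infty}\FF(\gammakappa,\alphakappa,\rhozerokappa)\geq\inf_{\mathcal{D}}\FF$. For the reverse inequality it is enough to exhibit, for every $(\gamma,\alpha,\rhozero)\in\mathcal{D}$, a family $(\gamma^{(\kappa)},\alpha^{(\kappa)},\rho_0^{(\kappa)})\in\mathcal{D}_\kappa$ with $\FF(\gamma^{(\kappa)},\alpha^{(\kappa)},\rho_0^{(\kappa)})\to\FF(\gamma,\alpha,\rhozero)$: then $\lim_{\kappa\to\infty}\inf_{\mathcal{D}_\kappa}\FF\leq\FF(\gamma,\alpha,\rhozero)$ for all such triples, and taking the infimum over $\mathcal{D}$ gives $\lim_{\kappa\to\infty}\inf_{\mathcal{D}_\kappa}\FF\leq\inf_{\mathcal{D}}\FF$.

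I would take the truncation $\gamma^{(\kappa)}\coloneqq\gamma\mathbbm{1}_{\{\gamma\leq\kappa\}}$, $\alpha^{(\kappa)}\coloneqq\alpha\mathbbm{1}_{\{\gamma\leq\kappa\}}$, $\rho_0^{(\kappa)}\coloneqq\rhozero$. This lies in $\mathcal{D}_\kappa$: the constraint $(\alpha^{(\kappa)})^2\leq(\gamma^{(\kappa)})^2+\gamma^{(\kappa)}$ reduces to the original one on $\{\gamma\leq\kappa\}$ and is $0\leq0$ elsewhere. Because $\gamma\in L^1(\RRdrei,(1+p^2)dp)$, the sets $\{\gamma>\kappa\}$ have measure tending to $0$, so $\gamma^{(\kappa)}\uparrow\gamma$ and $\alpha^{(\kappa)}\to\alpha$ pointwise a.e., with $\gamma^{(\kappa)}\leq\gamma$ and $\abs{\alpha^{(\kappa)}}\leq\abs{\alpha}$. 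Monotone convergence gives $\intp p^2\gamma^{(\kappa)}\to\intp p^2\gamma$ and $\rho_\gamma^{(\kappa)}\to\rho_\gamma$; in particular $\rho^{(\kappa)}=\rhozero+\rho_\gamma^{(\kappa)}\to\rho$, which also handles $-\mu\rho^{(\kappa)}$ and $\frac{\Vhat(0)}{2}(\rho^{(\kappa)})^2$. For the linear term I would dominate $\Vhat(p)\abs{\gamma^{(\kappa)}+\alpha^{(\kappa)}}$ by $2\Vhat\gamma+\Vhat\sqrt{\gamma}\in L^1$ (using $\abs{\alpha}\leq\gamma+\sqrt{\gamma}$, $\Vhat\in L^\infty$, and $\Vhat\in L^2$, which follows from $\Vhat\in L^1\cap L^\infty$) and invoke dominated convergence. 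For the quadratic term in $\gamma$, $\norm{\Vhat\ast h}_{L^\infty}\leq\norm{\Vhat}_{L^\infty}\norm{h}_{L^1}$ shows that $\innerprod{\gamma}{\Vhat\ast\gamma}$ is a continuous bilinear form on $L^1$, and $\gamma^{(\kappa)}\to\gamma$ in $L^1$ by dominated convergence.

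The only delicate point is the quadratic term in $\alpha$, since $\alpha$ is in general only in $L^1+L^2$. Here I would split $\alpha=\alpha_L+\alpha_H$ with $\alpha_L\coloneqq\alpha\mathbbm{1}_{\{\gamma\leq1\}}\in L^2$ and $\alpha_H\coloneqq\alpha\mathbbm{1}_{\{\gamma>1\}}\in L^1$; for $\kappa\geq1$ one has $\alpha^{(\kappa)}=\alpha_L+\alpha_H^{(\kappa)}$ where $\alpha_H^{(\kappa)}\coloneqq\alpha_H\mathbbm{1}_{\{\gamma\leq\kappa\}}\to\alpha_H$ in $L^1$ and $\alpha_L$ is independent of $\kappa$. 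Expanding $\innerprod{\alpha^{(\kappa)}}{\Vhat\ast\alpha^{(\kappa)}}$ into its contributions from $\alpha_L,\alpha_L$, from $\alpha_L,\alpha_H^{(\kappa)}$, and from $\alpha_H^{(\kappa)},\alpha_H^{(\kappa)}$: the first is constant, the third converges by the same $L^1$ argument as for $\gamma$, and the mixed one converges because $\norm{\Vhat\ast h}_{L^2}\leq\norm{\Vhat}_{L^2}\norm{h}_{L^1}$ together with Cauchy--Schwarz controls $\innerprod{\alpha_L}{\Vhat\ast(\alpha_H^{(\kappa)}-\alpha_H)}$. Summing all six terms yields $\FF(\gamma^{(\kappa)},\alpha^{(\kappa)},\rhozero)\to\FF(\gamma,\alpha,\rhozero)$, and together with the first paragraph this proves the lemma. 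I expect this last step to be the main obstacle: the failure of $\alpha$ to be globally integrable forces one to follow its $L^1$ and $L^2$ parts separately and to use $\Vhat\in L^1\cap L^\infty$ to push convolutions into $L^2$ — precisely the role that the hypothesis $\Vhat\in L^1$ plays throughout the paper.
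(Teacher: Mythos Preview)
Your proof is correct and follows essentially the same approach as the paper: truncate $(\gamma,\alpha)$ on $\{\gamma\leq\kappa\}$, verify this lies in $\mathcal{D}_\kappa$, and pass to the limit in each energy term to get $\inf_{\mathcal{D}}\FF\geq\limsup_\kappa\inf_{\mathcal{D}_\kappa}\FF$, with the reverse inequality coming from the inclusion $\mathcal{D}_\kappa\subset\mathcal{D}$. The paper's own proof is terse, simply invoking ``dominated convergence'' for the energy limit; you have supplied the details it omits, in particular the treatment of the quadratic $\alpha$-term via the $L^1+L^2$ splitting, which is exactly the point where a bare appeal to dominated convergence needs justification.
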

\begin{proof}
We recall the proof of Lemma 4.9 in \cite{bogoliubovenergy}. For any $\triple \in \mathcal{D}$, we have 
\begin{equation}
(\gamma\mathbbm{1}_{\{p\mid\gamma(p)\leq\kappa\}}, \alpha \mathbbm{1}_{\{p\mid\gamma(p)\leq\kappa\}}, \rhozero) \in \mathcal{D}_\kappa.    
\end{equation}
By dominated convergence, the energies of the states containing the cutoff converge to the energy of the original state for $\kappa\to \infty$:
\begin{equation}
\functional = \limsup_{\kappa\to\infty} \functionalflex{\gamma\mathbbm{1}_{\{p\mid\gamma(p)\leq\kappa\}}}{\alpha \mathbbm{1}_{\{p\mid\gamma(p)\leq\kappa\}}}{\rhozero}\geq \limsup_{\kappa \to\infty} \functionalflex{\gammakappa}{\alphakappa}{\rhozerokappa}
\end{equation}
where the inequality follows from the fact that $\triplekappa$ is a minimizer of the restricted problem.
Taking an infimum over $\mathcal{D}$ yields
\begin{equation}
\inf_{(\gamma,\alpha,\rhozero)\in \mathcal{D}}\functional \geq \limsup_{\kappa \to\infty} \functionalflex{\gammakappa}{\alphakappa}{\rhozerokappa}.
\end{equation}
Since the reverse direction follows from the fact that the restricted domains are contained in the full one, this concludes the proof.
\end{proof}

The uniform bound tells us that the full minimization problem can be reduced to a restricted one by choosing the cutoff large enough, which shows the existence of minimizers of the full problem as follows.
\begin{proof}[Proof of Theorem \ref{theoremExistence}]
First, recall that the case $\mu\leq 0$ is trivial as stated in the beginning of the proof since the vacuum is then a minimizer. For $\mu > 0$, we can use the statements from the previous sections.
Choose $\kappa_1$ from the subsequence considered before large enough such that $\kappa_1> C(1+P_0^{-4})$, where $C$ is the constant in the uniform bound in Theorem \ref{theoremBoundrestricted}. Thus, for arbitrary elements of this subsequence, we have $\triplekappa\in \mathcal{D}_{\kappa_1}$. Combining this with Lemma \ref{lemmaminimizingsequenceofrestricted} and the observation that the energies form a decreasing sequence in $\kappa$ since the restricted domains form an increasing sequence, we get
\begin{equation}
\begin{split}
\inf_{\triple\in\mathcal{D}}\functional
&= \lim_{\kappa\to\infty}\functionalflex{\gammakappa}{\alphakappa}{\rhozerokappa}\\
&=\inf_{\kappa}\functionalflex{\gammakappa}{\alphakappa}{\rhozerokappa}\\
&\geq\inf_{\triple\in\mathcal{D}_{\kappa_1}}\functional\\
&=\functionalflex{\gamma_{\kappa_1}}{\alpha_{\kappa_1}}{\rho_{0,\kappa_1}}\\
&\geq\inf_{\triple\in\mathcal{D}}\functional
\end{split}
\end{equation}
implying that $(\gamma_{\kappa_1},\alpha_{\kappa_1},\rho_{0,\kappa_1})$ is a minimizer.
\end{proof}

The minimizers constructed in this way clearly satisfy the properties in Corollary \ref{corollaryproperties}. In fact, they also hold for any other minimizer.
\begin{proof}[Proof of Corollary \ref{corollaryproperties}]
For the first property, we can apply the argument used in Theorem \ref{theorempropertiesrestricted} (in which the same claim was proved for the restricted minimizers) to the minimizers of the full problem. This was already shown in \cite{bogoliubovenergy}. Similarly, the second property is correct because equation \eqref{eqinequalityrhozerorhogamma} holds without limits for the minimizer. By Lemma \ref{lemmaupperboundenergy}, we have $\rho_\gamma >0$ as $\functionalflex{0}{0}{\rhozero}\geq -\mu^2/2\Vhat(0)$, and therefore, the desired strict inequality holds due to \eqref{eqpositivitykineticconvolution}. The last statement follows from the proof of Theorem \ref{theoremBoundrestricted}: The Euler-Lagrange equation \eqref{eqcombinedeulerlagrange} has to hold with equality for any minimizer of the full problem. The subsequent arguments can be applied without the limits to show this bound.
\end{proof}

\begin{remark}
It is an open question whether the case of positive temperature can be treated similarly. The existence of restricted minimizers and their $p^4$ decay for $\abs{p}\geq P_0$ was already shown in \cite{bogoliubovenergy} for a different cutoff, and the proof can be adapted to the one chosen here. Therefore, only the bound for small $\abs{p}$ is left to prove. If we define the functions $A_\kappa$ and $B_\kappa$ as before, we can again obtain uniform convergence on the ball $\{\abs{p}\leq P_0\}$ to functions $a$ and $b$ by the Arzelà-Ascoli theorem. In fact, a bound of the form $\gammakappap\leq C$ for $\abs{p}\leq P_0$ would follow from the bound $a\pm b \geq \epsilon >0$ on this ball as can be seen from the Euler-Lagrange inequalities. However, the energy comparison in step 6 of the proof of the bound for $T=0$ does not work anymore since there will be no condensate for large $T$.

For general $T$ it is not clear whether such a $L^\infty$ bound holds, but this is not essential for the proof: we need the bound on $\gammakappa$ only to show that the minimizing sequence $(\gammakappa, \alphakappa, \rhozerokappa)$ is contained in a restricted domain for which there is a minimizer. Thus, if the approach yields a bound $\gammakappa\leq f$ for some $f$ and there is a minimizer on the domain with cutoff $f$, the proof is complete. For instance, $f=C/p^2$ would suffice thanks to \cite{bogoliubovenergy}.
\end{remark}

\begin{remark}
Another question to ask is whether the model can be extended to more singular potentials such as the Coulomb interaction. It is evident by the explicit appearance of $\Vhat (0)$ in the functional \eqref{eqfunctional} that this is not possible for the approach considered here since $\Vhat (0)$ is infinite for a Coulomb potential. Thus, the model has to be modified in order to remove the divergent energy contribution. One potential route for doing so could be to start with the Jellium model for bosons in the finite boxes. Here, a constant background charge density makes the system neutral, which is linked to setting $\Vhat (0)=0$. However, this changes the model significantly, and it remains open how much of the analysis can be adapted to this setting.
\end{remark}

\noindent{\bf Acknowledgments.}
This work was a master's thesis project in the program \emph{Theoretical and Mathematical Physics} at LMU and TU Munich. The author would like to express his deepest gratitude to Prof. Phan Th\`anh Nam for his guidance and support.
The author acknowledges the support from the Deutsche Forschungsgemeinschaft (DFG project Nr. 426365943).

\section*{Data Availability}
Data sharing is not applicable to this article as no new data were created or analyzed in this study.

\printbibliography
\vspace{1cm}
\end{document}